\algrenewcommand\algorithmicprocedure{$\triangleright$}
\newcommand{\Abort}{\mathsf{Abort}}
\newcommand{\Fail}{\mathsf{Fail}}
\newcommand{\Succ}{\mathsf{Success}}
\newcommand{\Acc}{\mathsf{Accept}}
\newcommand{\BQP}{\mathsf{BQP}}
\newcommand{\X}{\mathsf{X}}
\newcommand{\Y}{\mathsf{Y}}
\newcommand{\Z}{\mathsf{Z}}
\newcommand{\CZ}{\mathsf{CZ}}
\newcommand{\cptp}[1]{\mathsf{#1}} %
\newcommand{\Pois}{\mathrm{Poisson}}
\newtheorem{theorem}{Theorem}
\newtheorem{lemma}{Lemma}
\newtheorem{remark}{Remark}
\newtheorem{corollary}{Corollary}
\title{Composably Secure Delegated Quantum Computation with Weak Coherent Pulses}
\date{}
\author[1]{Maxime Garnier}
\author[2]{Dominik Leichtle}
\author[3]{Luka Music}
\author[1]{Harold Ollivier}
\affil[1]{DIENS, École Normale Supérieure --  PSL Research University, CNRS, INRIA, 45 rue d'Ulm, 75005 Paris, France}
\affil[2]{School of Informatics, University of Edinburgh, 10 Crichton Street, Edinburgh EH8 9AB, United Kingdom}
\affil[3]{Quandela, 7 rue Leonard de Vinci, 91300 Massy, France}
\begin{document}

\maketitle

\begin{abstract}
  A client can delegate a quantum computation to a powerful remote server while ensuring the privacy and the integrity of its computation via Secure Delegated Quantum Computation (SDQC). Thanks to recent results making them noise-robust and resource-efficient, proofs-of-concept implementations of generic SDQC protocols have already been demonstrated. Yet, the requirements for implementing them are still too stringent to go beyond this step while maintaining high security expectations.

  To further reduce their physical resource cost, we show how to realise SDQC using weak coherent pulses (WCPs) instead of single photons. More precisely, we construct a protocol which guarantees that, among a sufficiently large batch of transmitted WCPs, at least one contained only a single photon. This holds even if the adversary controls the transmittance of the photonic link connecting the client and the server. Our protocol's security is proven in the composable Abstract Cryptography (AC) framework.

  This batch can then be fed to known quantum privacy amplification techniques to prepare a single secure qubit in the $\X - \Y$ plane, which can be used in any composable SDQC protocol which relies on the secure preparation of single qubits. Furthermore, the guarantee on the batch of states can also be used for Quantum Key Distribution (QKD) where the privacy amplification step is classical. In doing so, we address a weakness in the standard security proof of the decoy state method.

  While our protocol can be instantiated with any number of different intensities for the WCPs, using only two intensities already shows improved scaling at low transmittance, thus opening the possibility of increasing the distance between the client and the server.
\end{abstract}

\vspace*{0.5cm}
\begin{footnotesize}
	\noindent This paper extends the results presented in~\cite{GLMO24composably}. In particular, it enhances the scaling at low transmittance from \(1/\eta^3\) to \(1/\eta^2\), showcases numerical results and provides the full proofs of all technical statements. Furthermore, this work clarifies the relationship with the standard QKD decoy-state method by uncovering unrecognised weaknesses in the security proof and solving them. We also explicitly provide arguments for the intuition behind the protocol as well as the proofs.
\end{footnotesize}

\section{Introduction}

Building trust in quantum computing devices that are remotely available through service providers and vendors is of utmost importance. As they start reaching sizes of a few hundred qubits~\cite{C23ibm}, they approach the utility regime where they will hold sensitive data and valuable algorithmic developments. It is thus timely to develop methods to ensure the confidentiality and integrity of the end-user's computations.

One of the key questions underlying the research in Secure Delegated Quantum Computation (SDQC) concerns the possibility of classically verifying quantum computations~\cite{G04conference,A07scott,AV13is}. On the one hand, it is known that it can be achieved by a fully classical client at the price of computational hardness assumptions~\cite{M18classical}. On the other hand, statistically secure schemes exist but require the client to be able to send single-qubit states to the server \emph{i.e.} have quantum capabilities. Thus investigating the possible reduction in quantum resources on the client side of known statistically secure SDQC protocols is an important question for both practical implementability and foundational reasons.

Following this line of research, new verification protocols for Bounded-error Quantum Polynomial-time ($\BQP$) computations were introduced \cite{LMKO21verifying,KKLM22unifying} to add noise robustness to existing statistically secure SDQC protocols. Crucially, the overhead consists only of a polynomial number of computations that are not more difficult to implement than the unprotected one. These features made it possible to implement a proof-of-concept with trapped ions~\cite{DNMN23verifiable}. Additional simplifications have later been developed to further improve their practicality, such as in~\cite{KLMO24verification} which removes the assumptions of trusted preparations and measurements on the client side and shows that $\Z(\theta)$ rotations and bit flips are sufficient to obtain the same security guarantees. Furthermore, the security of this family of protocols is guaranteed in the stringent Abstract Cryptography framework~\cite{MR11abstract-cryptography}, which entails sequential and parallel composability, i.e.~the ability to reuse these protocols as building blocks for larger protocols while preserving their security properties. This approach enables modularity in both the theoretical and the experimental developments.

To further these efforts, this paper relaxes the assumption that the honest client sends single qubits. From a practical perspective, these qubits will most likely be of photonic nature. Hence, we replace the requirement of single-photon sources by simpler and cheaper Weak Coherent Pulse (WCP) sources. Using pulses of various intensities in a way similar to the decoy-state method~\cite{LMC05decoy}, we show that is it possible to send a batch of quantum states to the server with the guarantee that at least one of them was emitted as a single-photon state in a composably secure way. We prove that the correctness and security errors are negligible in the number of pulses sent. This technique is then composed with a quantum privacy amplification gadget~\cite{KKLM23asymmetric} to recover one secure single-photon state per batch. These states are then used in the protocols from~\cite{KKLM22unifying,KKLM23asymmetric} to construct an SDQC protocol.

\paragraph{Contributions and Related Work}

This works builds on~\cite{DKL11universal} where SDQC is realised with WCPs of a single intensity. Our work extends it by adding verifiability and a better scaling of the number of required pulses at low transmittance $\eta$. More specifically, the simplest instantiation with only two intensities achieves a $1/\eta^2$ scaling for the number of sent pulses compared to $1/\eta^4$ in~\cite{DKL11universal}. Compared to the protocol and implementation of remote state preparation in~\cite{JWHX19remote}, our work adds composability, which was a missing piece to claim security of their blind (but not verifiable) quantum computation protocol with decoy states. Independently from this work,~\cite{TM24unconditional} propose a verification protocol with WCPs that relies on~\cite{DKL11universal} and thus retains the $1/\eta^4$ scaling while also lacking composability.

In the course of this work, we uncovered a weakness in the security proof of decoy state Quantum Key Distribution (QKD) as it implicitly assumes that the number of pulses emitted with a given number of photons within a block of pulses has indeed converged toward its average value. While this is legitimate for low photon number pulses, this clearly breaks down for higher photon numbers. This jeopardizes the full security proof of this widely used method together with its finite-size analysis, and thereby possibly affects the scaling for low transmittance.

Fortunately, the composable security of our protocol makes it applicable to QKD, as it relies on the same batch preparation of quantum states before applying classical privacy amplification techniques. As our security proof does not rely on the precise shape of the photon number distribution, as do those for the decoy state method (e.g.~\cite{LCWX14concise}) our technique also offers robustness against WCP source imperfections.

\section{Preliminaries}
\label{sec:preliminaries}

An \emph{MBQC pattern} consists of a graph $G = (V, E)$, two subsets $I$ and $O$ of $V$ defining input and output vertices, a list of angles $\{\phi_v \}_{v \in V}$ with $\phi_v \in \Theta \coloneqq \{k\pi/4\}_{0 \leq k \leq 7}$ and a flow function $f : O^c \rightarrow I^c$. The computation starts by preparing the graph state $\ket{G} = \prod_{(v, w) \in E} \CZ_{v, w} \bigotimes_{v \in V}\ket{+}_v$. Then each qubit $v$ of $V$ is measured along the basis $\ket{\pm_{\phi'_v}}$ in an order defined by the flow, with $\ket{+_\alpha} = (\ket 0 + e^{i\alpha}\ket 1)/\sqrt 2$. The angle $\phi'_v$ depends on $\phi_v$, the outcomes of previously measured qubits and the flow. Any $\BQP$ algorithm in the circuit model can be translated in the MBQC model with at most a polynomial overhead (See e.g.~\cite{DKP07measurement-calculus}). Such patterns can be blindly delegated to a server thanks to the Universal Blind Quantum Computation (UBQC) Protocol~\cite{BFK09universal}. It replaces the initial $\ket+$ states by $\ket{+_\theta} $ with $\theta\sample \{k\pi/4\}_{0 \leq k \leq 7}$ and further adapts the measurement angles to compensate for this modified initialisation. Then, using blindness, SDQC protocols have been constructed~\cite{FK17unconditionally} by inserting traps (i.e. qubits whose measurement outcome is deterministic if performed correctly and known only to the client) that probe the behaviour of the server and allow the client to reject potentially corrupted results. By replacing local traps by interleaved computation and test rounds (rounds whose outcomes are deterministic and easy to compute for the client), \cite{LMKO21verifying,KKLM22unifying} provided protocols which have negligible correctness and security errors, no memory overhead and can additionally withstand a fixed amount of global noise.

In this paper, we use the protocol from~\cite{KKLM23asymmetric}, derived from these earlier works, which requires only $\ket{+_{\theta}}$ states and define a construction dubbed \(\mathsf{BatchRSP}\) allowing to send a batch of $\ket{+_{\theta}}$ states, one of which is guaranteed to be unknown to the server. We then use the Collaborative Remote State Preparation construction from~\cite{KKLM23asymmetric}, to transform the batch of states into protocol a single such state with perfect security. These recombined states can then be used in their protocol performing SDQC
\begin{theorem}[SDQC from Batch Remote State Preparation~\cite{KKLM23asymmetric}, informal]
  \label{thm:asymmetric}
  The protocol from~\cite{KKLM23asymmetric} uses \(\mathsf{BatchRSP}\) as a primitive to perform SDQC for $\mathsf{BQP}$ with information-theoretic composable security and negligible security error.
\end{theorem}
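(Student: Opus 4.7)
The plan is to leverage the composability of the Abstract Cryptography (AC) framework and reduce the statement to previously established security results. First, I would spell out the ideal SDQC resource: it takes a $\BQP$ computation description from the client, produces the correct output (or an abort signal), and leaks to a corrupted server at most a description of the computation's size. The proof then consists in exhibiting a simulator showing that the real protocol is indistinguishable from this ideal resource under any admissible distinguisher.

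Second, I would decompose the real protocol into three layers whose analyses plug into one another. Layer (a) is $\mathsf{BatchRSP}$, which prepares a batch of $\ket{+_\theta}$ states with the guarantee that at least one of them is perfectly hidden from the server; this is the primitive whose construction occupies the rest of the paper. Layer (b) is the Collaborative Remote State Preparation gadget of~\cite{KKLM23asymmetric}, which exploits the at-least-one-hidden property of the batch to distill a single $\ket{+_\theta}$ state whose angle is information-theoretically unknown to the server. Layer (c) is the SDQC protocol from~\cite{KKLM22unifying,KKLM23asymmetric} which, given a trusted source of single $\ket{+_\theta}$ states, achieves composable SDQC for $\BQP$ with negligible security and correctness errors.

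Third, I would invoke the AC composition theorem to glue the layers together. If $\mathsf{BatchRSP}$ realises the batch resource with error $\varepsilon_1$, the CRSP gadget converts the batch resource into a single-qubit resource with error $\varepsilon_2$, and the SDQC protocol realises the ideal SDQC resource from the single-qubit resource with error $\varepsilon_3$, then by sequential composition the entire construction realises the ideal SDQC resource with error at most $\varepsilon_1 + \varepsilon_2 + \varepsilon_3$, which remains negligible. Since all three sub-protocols are already formulated in the AC framework, the simulators and distinguishing bounds combine mechanically; no new reduction needs to be carried out at this level.

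The main obstacle, and the reason the bulk of the paper is devoted to Layer (a), is establishing the batch resource itself: $\mathsf{BatchRSP}$ must tolerate an adversary that fully controls the photonic link (in particular its transmittance) and can adaptively attack WCPs, while still certifying with overwhelming probability that at least one genuine single-photon $\ket{+_\theta}$ state reached the server untouched. Once this primitive is in place with a negligible $\varepsilon_1$, Theorem~\ref{thm:asymmetric} follows by directly quoting the AC-level composition bounds of~\cite{KKLM23asymmetric}.
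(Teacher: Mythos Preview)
The paper does not prove Theorem~\ref{thm:asymmetric} at all: it is an informal statement cited from~\cite{KKLM23asymmetric} and placed in the preliminaries as a black-box building block. Your sketch of layers (b) and (c) together with the AC composition theorem is a reasonable account of how~\cite{KKLM23asymmetric} establishes the result, so at that level you are aligned with what the cited reference does.

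However, you have conflated Theorem~\ref{thm:asymmetric} with the Corollary that follows Theorem~\ref{thm:decoy-realization}. Theorem~\ref{thm:asymmetric} takes $\mathsf{BatchRSP}$ as an \emph{assumed ideal resource}; there is no ``error $\varepsilon_1$ for realising the batch resource'' and no discussion of WCPs, adversarially controlled transmittance, or single-photon certification in its statement or proof. Your Layer~(a) and the entire final paragraph about ``establishing the batch resource itself'' belong to Theorem~\ref{thm:decoy-realization}, not to Theorem~\ref{thm:asymmetric}. The correct decomposition for Theorem~\ref{thm:asymmetric} alone is: start from the ideal $\mathsf{BatchRSP}$ resource, apply the Collaborative RSP gadget of~\cite{KKLM23asymmetric} to obtain a single secure $\ket{+_\theta}$ state, then feed these into the trap-based verification protocol of~\cite{KKLM22unifying,KKLM23asymmetric}; the security error is $\varepsilon_2 + \varepsilon_3$ in your notation, with no $\varepsilon_1$ term. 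Only when this is composed with Theorem~\ref{thm:decoy-realization} (which supplies $\varepsilon_1$ by constructing $\mathsf{BatchRSP}$ from $\mathsf{WCPGenerator}$) does one obtain the Corollary about WCP-based SDQC.
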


Our security is defined in the \emph{Abstract Cryptography (AC)} framework~\cite{MR11abstract-cryptography}. A protocol is secure if it is a good approximation of an ideal and secure-by-design \emph{resource}. This should hold if all participants follow the protocol, but also when malicious ones deviate from it arbitrarily. Secure protocol in this framework are inherently composable: sequential or parallel executions of secure protocols are also secure. Intuitively, if protocol $\Pi_A$ $\epsilon$-approximates resource $A$ and protocol $\Pi_B$ uses $A$ to approximate resource $B$, then replacing the call to $A$ in $\Pi_B$ by $\Pi_A$ degrades the security of the composed protocol by at most $\epsilon$. For this reason, Abstract Cryptography is a more stringent security framework compared to non-composable standalone security criteria.

\section{Modelling Weak Coherent Pulses}
\label{sec:wcpg}

A weak coherent pulse (WCP) with intensity $\mu > 0$ sent through an optical setup with transmittance $\eta$ results in the state $\sum_n \frac{(\eta \times \mu)^n}{n!}e^{-(\eta \times \mu)} \rho^{\otimes n}$ being created at the output of the setup. Here, $\rho$ is a single photon density matrix, and we have assumed phase randomisation.

To formally analyse the security of protocols relying on WCPs, we introduce the $\mathsf{WCPGenerator}$ (Resource~\ref{res:wcp}) that captures the properties of a source of WCPs. A malicious receiver of the WCPs may chose --- via a flag bit $c$ --- to remove the losses of the optical channel, thereby increasing the probability of multiphoton states. For each multiphoton event, the receiver instead receives the classical description of the state, simulating the leak of all information regarding the quantum state.\footnote{Resource~\ref{res:wcp} gives more power to the Receiver compared to a real WCP source. Hence, a protocol secure with Resource~\ref{res:wcp} is also secure with a real WCP source.} In the following, we denote $\cptp U(\rho) = \cptp U \rho \cptp U^\dagger$.

\begin{figure}[ht]
\begin{resource}[H]
  \caption{$\mathsf{WCPGenerator}$}
  \label{res:wcp}
  \begin{algorithmic}[0]
    \State \textbf{Public information:} set of unitaries $\mathcal{U}$; classical description $[\rho_0]$ of a quantum state, transmittance $\eta \in [0,1]$.
    \State \textbf{Sender's Input:} $\cptp U \in \mathcal{U}$ and $\mu \in \mathbb{R}^+_0$.
    \State \textbf{Server's Input:} $c\in\bin$, set to $0$ if honest.
    \Procedure{\textbf{Computation by the Resource}}{}
    \If{$c=0$}
    \State It samples $n \sample \Pois(\mu\times\eta)$ and sends the state $\cptp U(\rho_0)^{\otimes n}$ to the Receiver, with $\cptp U(\rho_0) = \cptp U \rho_0 \cptp U^\dagger$.
    \Else
    \State It samples $n \sample \Pois(\mu)$. If $n\leq 1$, it sends the state $\cptp U(\rho_0)^{\otimes n}$ to the Receiver. If $n>1$, it sends $n$ and the classical description of $\cptp U$ to the Receiver.
    \EndIf
    \EndProcedure
  \end{algorithmic}
\end{resource}
\end{figure}

\begin{remark}\label{rem:robustness}
  All our results generalise straightforwardly to distributions other than Poisson, so long as the malicious distribution allows the adversary to emulate the honest behaviour. This impacts only the instantiation of the estimation strategy in Section~\ref{sec:example}.
\end{remark}

\section{Batch State Preparation from Weak Coherent Pulses with Composable Security}
\label{sec:batch-rsp}
An important building block for several high level protocols such as QKD or SDQC is Remote State Preparation (RSP). RSP allows a party to remotely prepare inside another party's register a single copy of an unknown quantum state drawn from a non-trivial set without revealing any additional information about the state's identity. As an immediate consequence, WCPs are not suited for RSP since they may contain several copies of the state to be prepared, thereby potentially leaking information about the prepared through the additional copies. To work around this inherent incompatibility, we propose a new resource called \(\mathsf{BatchRSP}\) (Resource \ref{res:batchrsp}) that guarantees that at least one in a batch of transmitted quantum states has been sent as a single copy. Privacy amplification gadgets can then be used to recover directly useful cryptographic functionalities: a classical one for QKD and the Collective RSP gadget from~\cite{KKLM23asymmetric} for SDQC. The states prepared are of the form $\cptp U_i (\rho_0)$ for a fixed known state $\rho_0$ and $\cptp U_i \in \mathcal{U}$. For technical reasons, we require the set of unitaries $\mathcal{U}$ to be a group closed under the Quantum One-Time Pad, i.e.~$\X,\Z \in \mathcal{U}$, which is not a problem in practical cases.

\begin{figure}[ht]
\begin{resource}[H]
  \caption{$\mathsf{BatchRSP}$}
  \label{res:batchrsp}
  \begin{algorithmic}[0]
    \State \textbf{Public Information:} $K \in \mathbb{N}^+$; set of unitaries $\mathcal{U}$; classical description $[\rho_0]$ of a quantum state.
    \State \textbf{Sender's Input:} $K$ unitaries $\cptp U_1,\ldots, \cptp U_K \in \mathcal{U}$.
    \State \textbf{Receiver's Input:} $c \in\bin$, set to $0$ if honest.
    \Procedure{\textbf{Computation by the Resource}}{}

    \If{$c=0$}
    \State It sends the states $\cptp U_1 (\rho_0), \ldots, \cptp U_K (\rho_0)$ to the Receiver.
    \Else
    \State It samples $k \sample \{ 1,\ldots,K \}$ uniformly at random.
    \State It sends $k$, the classical descriptions of $\{ \cptp U_i \}_{i\neq k}$, and the quantum state $\cptp U_k(\rho_0)$ to the Receiver.
    \EndIf
    \EndProcedure
  \end{algorithmic}
\end{resource}
\end{figure}

At a high level, Protocol~\ref{proto:decoy-state-method} constructs the \(\mathsf{BatchRSP}\) resource from WCPs in
three rounds.
(i) The first round consists in the generation and transmission of $N$ WCPs with at most \(N\) different
intensities. While all the possible values for the intensities are known to the Receiver, the order is permuted so that they do not know which pulse is sent with which intensity.
(ii) Upon receiving each pulse, the Receiver performs a non-destructive measurement to identify vacuum pulses and discards them. Among the remaining pulses, it chooses a subset of size $K$ from which it extracts a single photon -- where $K$ is the size of the batch --  and returns the corresponding indices to the Sender. If there are not enough such pulses, the Receiver aborts.
(iii) The final round consists of the Sender runs an estimation Algorithm~\({\cal B}\) on the indices of the pulses accepted by the Receiver to
decide whether to continue. If this estimation phase succeeds, the Sender computes the corrections and communicates them to the Receiver who applies them. These corrections ensure that the final states correspond to the Sender's desired states.

The estimation Algorithm~\({\cal B}\) is left unspecified in the protocol and our result (Theorem~\ref{thm:decoy-realization}) reduces the protocol's correctness and security in the composable AC framework to simple conditions on this algorithm. These can then be seen as generic guidelines for designing secure protocols, while allowing for enough flexibility to be applicable to a multitude of settings (in particular, different choices for the pulse intensities). It makes its decision based on correlations between the intensities and the accepted pulses. For correctness sake, if the transmittance of the channel is $\eta$, it must accept batches generated by an honest Sender and Receiver. To preserve security on the other hand, in the worst malicious case with transmittance equal to $1$, it must reject if it senses that the Receiver is trying to make the Sender accept a batch that contains only pulses with two or more photons.

Informally, a key insight into the security of Protocol~\ref{proto:decoy-state-method} is to realise that the Sender explicitly selects the order of the WCPs at random before sending them while never divulging this order. This randomness is precisely what allows us to restrict the strategy of the cheating Receiver to counting the number of photons received in each pulse and deciding whether or not to accept it\footnote{This fact is often used in other works based on WCPs such as decoy state methods in QKD, but without a proper justification.}. It cannot use the link between the pulse intensity and the number of photons to make its decision, meaning that our estimation algorithm can use this hidden information to detect whether the Receiver attempted to cheat. %

\begin{figure}[htp]
\begin{protocol}[H]
  \caption{Multi-Intensity Weak Coherent Pulse Method}
  \label{proto:decoy-state-method}
  \begin{algorithmic}[0]
    \State \textbf{Public Information:} $N, K\in \mathbb{N}^+$, where $K \leq N$; group of unitaries $\mathcal{U}$; classical description $[\rho_0]$ of a quantum state; transmittance $\eta \in [0,1]$; pulse intensities $\mu_1, \dots, \mu_N \in \mathbb{R}^+_0$; efficient, classical estimation algorithm $\mathcal{B}$.

    \State{\textbf{Sender's Input:}} $K$ unitaries $\cptp U_1,\ldots, \cptp U_K \in \mathcal{U}$.

    \Procedure{\textbf{Sender - Sending WCP}}{}
    \State Sample a random permutation $\pi \sample \operatorname{S}_N$.
    \State Let $(\mu_1', \ldots, \mu_N') \gets (\mu_{\pi(1)}, \ldots, \mu_{\pi(N)})$.%
    \For{$i \in [\![1,N]\!]$}
    \State Sample $\cptp U_i' \sample \mathcal{U}$ from the Haar measure on $\mathcal{U}$.
    \State Call $\mathsf{WCPGenerator}$ with $\cptp U = \cptp U_i'$, $\mu = \mu_i'$, $[\rho_0]$, and transmittance $\eta$.
    \EndFor
    \EndProcedure

    \Procedure{\textbf{Receiver - Acknowledging WCP Reception}}{}
    \State Let $\tilde{I}$ be the set of rounds whose pulses contain at least one photon.

    \If{$|\tilde{I}| < K$}
    \State Send $\Abort$ to the Sender and stop.
    \EndIf
    \State Let $I' \gets$ random subset of $\tilde{I}$ of size $K$.
    \State For $i \in I'$, store one round-$i$ photon in quantum memory.
    \State Send $I'$ to the Sender.
    \EndProcedure

    \Procedure{\textbf{Sender - Estimation}}{}
    \State Undo the permutation, i.e.~let $I \gets \{ \pi^{-1}(i) | i \in I' \}$.
    \State Run estimation algorithm $\mathcal{B}$ on input $I$.
    \If{$\mathcal{B}$ returns $\Abort$}
    \State Send $\Abort$ to the Receiver and stop.
    \EndIf
    \State Relabel the set of kept states in $I'$ from $1$ to $K$. Sample a random permutation $\sigma \sample \operatorname{S}_K$.
    \For{$j \in [\![1,K]\!]$}
    \State Let $\tilde{\cptp{U}}_j = \cptp{U}_{\sigma(j)} \cptp{U'}_j^{\dagger}$.
    \EndFor
    \State Send the corrections  $(\tilde{\cptp{U}}_j)_{j\in \tilde{I}}$ and the permutation $\sigma$ to the Receiver.

    \EndProcedure

    \Procedure{\textbf{Receiver - Corrections}}{}
    \For{$j \in [\![1,K]\!]$}
    \State Apply the unitary $\tilde{\cptp{U}}_j$ to the $j$-th kept state.
    \EndFor
    \State Permute all kept $K$ quantum states using $\sigma^{-1}$, and set the result as the output.
    \EndProcedure
  \end{algorithmic}
\end{protocol}
\end{figure}

\begin{figure}[ht]
\begin{game}[H]
  \caption{Correctness of the Multi-Intensity Weak Coherent Pulse Method}
  \label{game:correctness}
  \begin{algorithmic}[0]
    \State \textbf{Parameters:} $N,K \in \mathbb{N}^+$, where $K \leq N$; $\mu_1, \dots, \mu_N \in \mathbb{R}^+_0$; $\eta \in [0,1]$; algorithm $\mathcal{B}$.
    \Procedure{GameCor}{}
    \State $\tilde{I} \gets \{ i=1,\ldots,N | n_i \sample \Pois(\mu_i \times \eta), n_i \geq 1 \}$.
    \If{$|\tilde{I}| < K$}
    \State Return $\Abort$ and stop.
    \EndIf
    \State Sample at random a subset $I$ of $\tilde{I}$ of size $K$.
    \State Run $\mathcal{B}$ on input $I$ and return its output.
    \EndProcedure
  \end{algorithmic}
\end{game}
\end{figure}

\begin{figure}[ht]
\begin{game}[H]
  \caption{Security of the Multi-Intensity Weak Coherent Pulse Method}
  \label{game:security}
  \begin{algorithmic}[0]
    \State \textbf{Parameters:} $N,K \in \mathbb{N}^+$, where $K \leq N$; $\mu_1, \ldots, \mu_N \in \mathbb{R}^+_0$; algorithm $\mathcal{B}$.
    \Procedure{GameSim}{$\mathcal{D}$}

	\State Sample $\{n_i \sample \Pois(\mu_i)\}_{i \in [1,N]}$.
    \State $J_n \gets \{ i \in [1,N] | n_i = n \}$, $c_n \gets | J_n |$ for $n \in \mathbb{N}$.
    \State $(d_n)_{n \in \mathbb{N}} \gets \mathcal{D} \left( (c_n)_{n \in \mathbb{N}} \right)$, with $d_n \leq c_n \forall n \in \mathbb{N}$.
    \State $I_n \gets$ random subset of $J_n$ of size $d_n$, such that $I \gets \bigcup_{n\in\mathbb{N}} I_n$ satisfies $|I| = K$.
    \If{$\mathcal{B}(I)$ returns $\Acc$, and $d_0 = d_1 = 0$}
    \State Return $\Fail$.
    \Else
    \State Return $\Succ$.
    \EndIf
    \EndProcedure
  \end{algorithmic}
\end{game}
\end{figure}

Theorem~\ref{thm:decoy-realization} reduces the composable correctness and security of the protocol to entirely classical conditions (Game~\ref{game:correctness} and~\ref{game:security}) while remaining completely oblivious to how the pulse intensities are chosen and how the estimation is performed.
This means that $\mu_1, \ldots, \mu_N$ and $\mathcal{B}$ can be optimised independently, so long as they satisfy the sufficient conditions for correctness and security.

\begin{theorem}\label{thm:decoy-realization}
\newcounter{count:decoy-realization}
\setcounterref{count:decoy-realization}{thm:decoy-realization}
  For given transmittance $\eta \in [0,1]$, parameters $N,K \in \mathbb{N}^+$, $\mu_1, \ldots, \mu_N \in \mathbb{R}^+_0$, and algorithm $\mathcal{B}$, Protocol~\ref{proto:decoy-state-method} $\varepsilon$-statistically constructs Resource~\ref{res:batchrsp} ($\mathsf{BatchRSP}$) from Resource~\ref{res:wcp} ($\mathsf{WCPGenerator}$) if Game~\ref{game:correctness} returns $\Abort$ with probability at most $\varepsilon$, and Game~\ref{game:security} returns $\Fail$ with probability at most $\varepsilon$ for all maps $\mathcal{D}$.
\end{theorem}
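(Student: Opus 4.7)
Working in the AC framework, the goal is to exhibit, for every adversarial receiver $\mathcal{A}$, a simulator $\Sigma$ such that the real execution of Protocol~\ref{proto:decoy-state-method} composed with $\mathcal{A}$ on top of $\mathsf{WCPGenerator}$ is $\varepsilon$-indistinguishable from $\Sigma$ composed with $\mathsf{BatchRSP}$. Since only the receiver may be malicious, the analysis naturally splits into a correctness case (both parties honest) and a security case (malicious $\mathcal{A}$).

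\textbf{Correctness.} I would first check that, in an honest-honest execution which does not abort and in which $\mathcal{B}$ accepts, the correction rule $\tilde{\cptp U}_j = \cptp U_{\sigma(j)} \cptp U'^{\dagger}_j$ brings each kept single-photon state exactly to $\cptp U_{\sigma(j)}(\rho_0)$, and the final $\sigma^{-1}$ restores the ordering required by $\mathsf{BatchRSP}$. Hence the only distinguishing event is the abort, which occurs iff too few pulses contain a photon or $\mathcal{B}$ aborts. Inspecting this random experiment shows it is distributed exactly as Game~\ref{game:correctness}, so its probability is at most $\varepsilon$.

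\textbf{Security (simulator).} For malicious $\mathcal{A}$ (feeding $c=1$ to $\mathsf{WCPGenerator}$), I would define $\Sigma$ as follows. It samples $n_i \sample \Pois(\mu_i)$ and a random permutation $\pi \sample S_N$. For each pulse with $n_i \geq 2$, it draws a Haar-random $\cptp U'_i$ and sends $(n_i, [\cptp U'_i])$ to $\mathcal{A}$, mirroring the malicious branch of $\mathsf{WCPGenerator}$. For each pulse with $n_i = 1$, it transmits one half of a locally prepared maximally entangled pair, which is the usual steering equivalent of sending a Haar-twirled copy of $\rho_0$. Upon receiving $I'$, the simulator sets $I=\pi^{-1}(I')$, computes $(d_n)_n$, and runs $\mathcal{B}(I)$. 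If $\mathcal{B}$ aborts, it forwards the abort to $\mathcal{A}$. If $\mathcal{B}$ accepts and $d_1 \geq 1$, it invokes $\mathsf{BatchRSP}$ with $c=1$ to obtain $(k, \{\cptp U_i\}_{i\neq k}, \cptp U_k(\rho_0))$, teleports/steers the retained EPR half corresponding to the $k$-th slot of the batch onto $\cptp U_k(\rho_0)$, samples the $\cptp U'_j$'s for the remaining single-photon slots consistently, and forwards the resulting corrections $(\tilde{\cptp U}_j)_j$ and $\sigma$ to $\mathcal{A}$. The only situation $\Sigma$ cannot emulate is when $\mathcal{B}$ accepts while $d_0=d_1=0$: then $\mathsf{BatchRSP}$ provides no hidden state onto which to steer, and $\Sigma$ aborts.

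\textbf{Reduction to the classical game and main obstacle.} The crux of the proof is to reduce any quantum strategy of $\mathcal{A}$ to a classical map $\mathcal{D}$ that reads only the histogram $(c_n)_n$ and returns an acceptance histogram $(d_n)_n$, with the accepted indices uniform within each photon-count class. I expect this to follow from two symmetries in $\mathcal{A}$'s view: (i) the hidden permutation $\pi$ severs the link between the physical pulse label and the intensity, so that conditioning on $(c_n)_n$ the received messages are exchangeable within each class; (ii) the Haar-random $\cptp U'_i$ ensures that for $n_i \le 1$ the quantum messages carry no information about $i$ beyond the photon count (equal to the twirled state or vacuum). Once this exchangeability is established, the distribution of $I$ at the input of $\mathcal{B}$ in the real protocol coincides with the one produced in Game~\ref{game:security}, and the simulator's failure probability is exactly the probability of the event $\{\mathcal{B}(I)=\Acc,\ d_0=d_1=0\}$, bounded by $\varepsilon$ by hypothesis. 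The hard step is thus the rigorous symmetrisation argument turning a general quantum-and-classical adversary into the classical $\mathcal{D}$; the rest of the proof is bookkeeping matching real and ideal views case by case.
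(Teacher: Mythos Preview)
Your approach is essentially the paper's: split into correctness and security, build the simulator with EPR halves on single-photon pulses and teleportation after the $\mathsf{BatchRSP}$ call, and reduce the adversary to a classical histogram map $\mathcal{D}$ via the permutation/Haar symmetries. The correctness argument and the symmetrisation sketch match the paper's proof closely.

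There is one concrete gap in your simulator. You describe what $\Sigma$ sends for $n_i\ge 2$ and $n_i=1$, but not for $n_i=0$; and you describe its post-$\mathcal{B}$ behaviour only when $d_1\ge 1$, yet you (correctly) declare the failure event to be $d_0=d_1=0$. The missing case is $\mathcal{B}$ accepting with $d_1=0$ but $d_0\ge 1$: here no EPR half is available to teleport onto. The paper handles this by also sending vacuum when $n_i=0$, then picking $j^\ast$ uniformly among accepted indices with $n_{j^\ast}\le 1$; if $n_{j^\ast}=0$ it simply sets $\tilde{\cptp U}_{j^\ast}=\cptp U\,\cptp U_{j^\ast}'^{\dagger}$ for a fresh Haar-random $\cptp U$, which is indistinguishable because the adversary received vacuum at that slot. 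A related detail: the index $k$ returned by $\mathsf{BatchRSP}$ need not land on a single-photon slot, so $\Sigma$ must choose $\sigma$ with $\sigma(j^\ast)=k$ rather than ``the EPR half corresponding to the $k$-th slot''. Once you add these two points, your simulator coincides with the paper's and the rest of your argument goes through.
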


Informally, Game~\ref{game:correctness} identifies the two possibilities for aborting the protocol and hence computing the correctness error, namely not enough pulses with more than one photon where received or the estimation procedure has aborted. Game~\ref{game:security} does the same for the security of Protocol~\ref{proto:decoy-state-method}.
In a nutshell, it formalizes the information available to the (malicious) Receiver and its optimal attack strategy: acknowledging $K$ pulses with no single-photon pulse while passing the estimation phase given by algorithm $\mathcal{B}$ and only knowing the photon numbers of each pulse. The detailed proof of Theorem~\ref{thm:decoy-realization} is given in Appendix~\ref{app:proof-thm-protocol}. By the composition of AC protocols, we can combine Theorem~\ref{thm:decoy-realization} with the result from~\cite{KKLM23asymmetric} (Theorem~\ref{thm:asymmetric}) to construct from WCPs an SDQC protocol with negligible error for classical input $\mathsf{BQP}$ computations.

\begin{corollary}
  There exists a WCP-based noise-robust SDQC protocol for $\mathsf{BQP}$ with information-theoretic composable security and negligible security error.
\end{corollary}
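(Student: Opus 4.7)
The plan is to deduce the corollary by pure composition in the Abstract Cryptography framework, invoking the two main results stated earlier in the excerpt. First I would appeal to Theorem~\ref{thm:decoy-realization} to argue that Protocol~\ref{proto:decoy-state-method} $\varepsilon_1$-constructs $\mathsf{BatchRSP}$ from $\mathsf{WCPGenerator}$, provided that the pulse intensities $\mu_1,\ldots,\mu_N$ and the estimation algorithm $\mathcal{B}$ are chosen so that Games~\ref{game:correctness} and~\ref{game:security} abort/fail with negligible probability. I would then invoke Theorem~\ref{thm:asymmetric} to argue that the protocol of~\cite{KKLM23asymmetric}, which uses $\mathsf{BatchRSP}$ as its sole cryptographic primitive, $\varepsilon_2$-constructs a noise-robust SDQC resource for $\mathsf{BQP}$ with negligible $\varepsilon_2$.

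The second step is to compose the two constructions. Since AC-security is preserved under sequential composition with additive degradation, replacing each call to $\mathsf{BatchRSP}$ in the SDQC protocol by an instance of Protocol~\ref{proto:decoy-state-method} yields a protocol that constructs the SDQC resource directly from $\mathsf{WCPGenerator}$, with total error bounded by a polynomial in $\varepsilon_1+\varepsilon_2$ (polynomial because the outer SDQC protocol invokes $\mathsf{BatchRSP}$ a polynomial number of times in the size of the $\mathsf{BQP}$ computation). Provided each constituent error is negligible, the composed error remains negligible.

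The main obstacle, therefore, is exhibiting at least one concrete choice of $N$, $K$, intensities $\{\mu_i\}$ and estimation algorithm $\mathcal{B}$ that makes both games' failure probabilities negligible for any $\eta \in (0,1]$. I would use the two-intensity instantiation discussed in Section~\ref{sec:example}: fix the batch size $K$ to the value required by the outer SDQC protocol, pick the two intensities and their proportions so that the expected number of single-photon pulses arriving at the Receiver comfortably exceeds $K$, and let $\mathcal{B}$ accept exactly when the empirical intensity statistics of the acknowledged set $I$ lie within calibrated bounds around the honest values. Standard Chernoff/Hoeffding-type concentration on the $\mathrm{Poisson}$-distributed pulse counts then gives exponentially small correctness error in $N$, while the security Game~\ref{game:security} is controlled because any strategy $\mathcal{D}$ placing $d_0=d_1=0$ must over-represent multi-photon indices, a deviation detected by $\mathcal{B}$ except with exponentially small probability. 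The paper's $1/\eta^2$ scaling ensures that $N$ can be taken polynomial in the security parameter, keeping the overall construction efficient.

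Finally, I would assemble these pieces: the negligible composable error of Protocol~\ref{proto:decoy-state-method} from Theorem~\ref{thm:decoy-realization}, combined with the negligible composable error of the outer SDQC protocol from Theorem~\ref{thm:asymmetric}, yields a WCP-based SDQC protocol for $\mathsf{BQP}$ whose information-theoretic composable security error is negligible, which is exactly the content of the corollary. Noise robustness is inherited from the outer protocol of~\cite{KKLM23asymmetric}, since the replacement of single-photon RSP by the WCP-based $\mathsf{BatchRSP}$ construction affects only the preparation subroutine and not the test/computation-round structure responsible for noise tolerance.
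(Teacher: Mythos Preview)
Your proposal is correct and follows the same approach as the paper: the corollary is obtained by AC-composing Theorem~\ref{thm:decoy-realization} (instantiated via the two-intensity construction of Section~\ref{sec:example} to make the game errors negligible) with Theorem~\ref{thm:asymmetric}. The paper's own justification is the single sentence preceding the corollary, and your write-up simply makes explicit the ingredients---the need for a concrete instantiation of $\mathcal{B}$ and the intensities, the polynomial number of $\mathsf{BatchRSP}$ calls, and the inheritance of noise robustness---that the paper leaves implicit.
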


\section{An Instantiation with Two Intensities}
\label{sec:example}
In this section, we provide the simplest non-trivial instantiation of our technique by using two different pulse intensities:
half of the $N$ pulses will be emitted with intensity $\nu$ and the other half will be emitted with intensity $\nu'$, where $0< \nu < \nu' $ without loss of generality. We note $P$ the number of pulses with intensity $\nu$ that are acknowledged by the Receiver as having at least one photon. Similarly $P'$ is the number of acknowledged pulses reported by the Receiver with intensity $\nu'$. In the language of Protocol~\ref{proto:decoy-state-method}, $P = |\{i \in I, \mu_i = \nu\}|$ and $P' = |\{i \in I, \mu_i = \nu'\}|$ in the honest case. Both quantities can be evaluated by the Sender as it keeps track of the intensity $\mu_i$ used for the $i$-th pulse. The Sender then performs the following steps.
\begin{enumerate}
\item It computes
\begin{equation}
  T = \frac{1}{bc'-b'c}\left(c'P - cP' \right),\label{eq:T}
\end{equation}
with $a = e^{-\nu}, b = \nu e^{-\nu}, c = \nu^2 e^{-\nu}/2$ and similarly for $a', b', c'$ where $\nu$ is replaced by $\nu'$.
\item It computes $t$, the average value of $T$ that would have been obtained with an honest Receiver:
\begin{equation}
 t = \frac{1}{bc'-b'c}\left( c'(1-e^{-\eta\nu}) - c(1-e^{-\eta\nu'}) \right)\frac{N}{2}.\label{eq:t}
\end{equation}
\item It accepts whenever $T > t - \Delta_0 \frac{N}{2}$ for a constant $\Delta_0 > 0$.
\end{enumerate}
To keep the presentation of this section concise, the motivation for these choices is deferred to Section~\ref{sec:discussion}.

The procedure above fully defines Algorithm~$\mathcal B$\begin{figure}[ht]
\begin{algo}[H]
  \caption{Estimation algorithm $\mathcal B$}
  \label{algo:b}
  \begin{algorithmic}[0]
    \State \textbf{Input:} $I$
    \Function{Estimation}{}
    \State Fix $\Delta_0 > 0$.
    \State Compute $T$ from Eq.~\ref{eq:T} and $t$ from Eq.~\ref{eq:t}
    \EndFunction
    \If{$T  \geq t - \Delta_0 \frac{N}{2}$}
    \State Return $\Acc$
    \Else
    \State Return $\Abort$
    \EndIf
  \end{algorithmic}
\end{algo}
\end{figure}

\begin{theorem}[Correctness and Security Errors]
\label{thm:instantiation}
\newcounter{count:instantiation}
\setcounterref{count:instantiation}{thm:instantiation}
  Given $\Delta_0 > 0$ in Algorithm~$\mathcal B$, $\delta > 0$ such that $K = ((2-e^{-\eta\nu}-e^{-\eta\nu'})/2 - \delta)N$ and $C = \max(c, c')$, the correctness error $\varepsilon_{\text{corr}}$ satisfies
  \begin{align}
    \varepsilon_{\text{corr}} \leq \exp(-\delta^2 N) + \exp(-\frac{\Delta_0^2 (bc' - b'c)^2}{4 C^2}N)
  \end{align}
while the security error is negligible in $N$ whenever there are additional constants $\Delta_0', \Delta_0'' > 0$ such that
\begin{align}
\label{eq:uuubound}
&\Delta_0 + \Delta_0' + \frac{c'}{bc'-b'c}\Delta_0'' \\
&= \frac{c'(1-e^{-\eta\nu}) - c(1-e^{-\eta\nu'}) - c'(1-a-b-c)}{bc'-b'c}. \nonumber
\end{align}
\end{theorem}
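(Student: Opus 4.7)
The plan is to apply Theorem~\ref{thm:decoy-realization}, reducing the task to bounding $\Pr[\text{Game~\ref{game:correctness}} = \Abort]$ and $\Pr[\text{Game~\ref{game:security}} = \Fail]$ uniformly over distinguishers $\mathcal D$, for the concrete estimator $\mathcal B$ of Algorithm~\ref{algo:b}.

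\emph{Correctness.} Split the abort event by a union bound into (i) $|\tilde I|<K$ and (ii) $\mathcal B$ rejects, i.e., $T<t-\Delta_0 N/2$. For (i), $|\tilde I|$ is a sum of $N$ independent Bernoullis of total mean $(N/2)(2-e^{-\eta\nu}-e^{-\eta\nu'})$ which exceeds $K$ by $\delta N$, so Hoeffding's inequality directly yields the first term $\exp(-\delta^2 N)$. For (ii) rewrite $c'P-cP'=(c+c')P-cK$ and apply Hoeffding to the hypergeometric variable $P$ (since $I$ is a uniform size-$K$ subset of $\tilde I$ whose intensity labels have range $1$); after rescaling, the slack $\Delta_0 N/2$ on $T$ corresponds to a slack of $\Delta_0 N (bc'-b'c)/(2(c+c'))$ on $P$, and using $c+c'\le 2C$ yields the second term $\exp(-\Delta_0^2(bc'-b'c)^2 N/(4C^2))$. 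The $O(\delta N)$ bias between $t$ and the true honest $\avg[T]$ coming from the size-$K$ sub-sampling of $\tilde I$ is absorbed into $\Delta_0$.

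\emph{Security.} Fix a distinguisher $\mathcal D$ and condition on $d_0=d_1=0$. The central computation is the attacker's achievable mean. Because the sender's permutation is uniformly random and never revealed, a Poisson-splitting/exchangeability argument yields, to leading order in $N$,
\begin{equation*}
\avg[c'P - cP'] \;=\; \sum_{n\ge 2} d_n \cdot \frac{c'\,p_n^{(\nu)} - c\,p_n^{(\nu')}}{p_n^{(\nu)} + p_n^{(\nu')}}, \qquad p_n^{(\mu)} = \mu^n e^{-\mu}/n!.
\end{equation*}
A direct computation shows the coefficient of $d_n$ vanishes at $n=2$ (since $c/c'=p_2^{(\nu)}/p_2^{(\nu')}$) and is strictly negative for $n\ge 3$, because the ratio $p_n^{(\nu)}/p_n^{(\nu')}=(\nu/\nu')^n e^{\nu'-\nu}$ is strictly decreasing in $n$ and equals $c/c'$ at $n=2$. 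The attacker's mean is therefore maximised by saturating $d_2\le c_2$ (zero cost) and placing the residual $K-c_2$ acknowledgements at $n\ge 3$, each contributing strictly negatively. Concentrating $c_2$ around $(N/2)(c+c')$ within the slack $\Delta_0'' N/2$ by Hoeffding on $N$ independent Bernoullis and invoking the identity $\sum_{n\ge 3}p_n^{(\nu)}=1-a-b-c$, one upper-bounds $\avg[T]$ by an expression which, combined with the hypothesis in Eq.~\ref{eq:uuubound}, lies at least $\Delta_0' N/2$ below the acceptance threshold $t-\Delta_0 N/2$. A final Hoeffding estimate on $c'P-cP'$ (per-summand range $\le 2C$, deviation $\Delta_0' N/2$) then gives $\Pr[\mathcal B(I)=\Acc\wedge d_0=d_1=0]\le 2\exp(-\Omega(N))$, which is negligible in $N$.

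\emph{Main obstacle.} The delicate step is the security bound. Since $\mathcal D$ chooses $(d_n)_n$ adaptively from the entire random profile $(c_n)_n$, the bound on $\avg[T]$ must hold uniformly in $\mathcal D$, which precludes any naive union bound over strategies. The proof therefore relies on a layered concentration argument: first on the profile (in particular on $c_2$, which governs the attacker's ``free'' budget at the zero-coefficient index $n=2$), then on $c'P-cP'$ around its conditional mean. The three slacks must be allocated so that their sum saturates the equality in Eq.~\ref{eq:uuubound}: $\Delta_0$ provides the threshold slack in $\mathcal B$, $\Delta_0'$ absorbs the Hoeffding deviation of $c'P-cP'$, and $\Delta_0''$ absorbs that of $c_2$. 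The specific appearance of $c'(1-a-b-c)/(bc'-b'c)$ on the right-hand side of Eq.~\ref{eq:uuubound} arises precisely from bounding the negative $n\ge 3$ contribution to $\avg[c'P-cP']$, and making this bookkeeping tight without loosening the exponent is the principal technical difficulty.
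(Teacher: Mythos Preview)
Your correctness argument is close to the paper's. The paper treats $P$ and $P'$ directly as binomials with parameters $N/2$ and $1-e^{-\eta\nu}$ (resp.~$1-e^{-\eta\nu'}$), effectively ignoring the size-$K$ sub-sampling from $\tilde I$ to $I$; you instead keep the hypergeometric structure and absorb the resulting bias into $\Delta_0$. Both routes yield the stated bound up to constants.

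Your security argument, however, diverges from the paper's and contains an internal inconsistency. The paper does \emph{not} compute the attacker's conditional mean or use that the $n\ge 3$ coefficients are negative. Instead it applies the crude pointwise bound
\[
c'D_{\ge 3}-cD'_{\ge 3}\;\le\; c'\,C_{\ge 3},
\]
obtained from $D_{\ge 3}\le C_{\ge 3}$ and $D'_{\ge 3}\ge 0$, together with the concentration of the hypergeometric pair $(D_2,D'_2)$ to show $c'D_2-cD'_2\approx 0$ (this is where $\Delta_0'$ enters), and then concentrates $C_{\ge 3}$ around $(1-a-b-c)N/2$ (this is where $\Delta_0''$ enters). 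The term $c'(1-a-b-c)/(bc'-b'c)$ in Eq.~\eqref{eq:uuubound} is precisely the limit of $c'C_{\ge 3}/((bc'-b'c)N/2)$, i.e.~it is a \emph{positive} upper bound on the $n\ge 3$ contribution, not a consequence of that contribution being negative.

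Your own observation that the $n\ge 3$ coefficients in $\avg[c'P-cP']$ are strictly negative is correct and is in fact \emph{sharper} than what the paper uses: it gives $\avg[c'P-cP'\mid (c_n),(d_n),d_0=d_1=0]\le 0$, which would yield a condition without the $c'(1-a-b-c)$ term, strictly weaker than Eq.~\eqref{eq:uuubound}. But you cannot then claim that $c'(1-a-b-c)$ ``arises precisely from bounding the negative $n\ge 3$ contribution''---that is backwards. Likewise, your role for $\Delta_0''$ (concentrating $c_2$) is not what produces Eq.~\eqref{eq:uuubound}; in the paper $\Delta_0''$ controls the fluctuation of $C_{\ge 3}$, not $c_2$. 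So as written, your security sketch mixes two incompatible arguments: either follow your mean-based route (and obtain a stronger statement than Theorem~\ref{thm:instantiation}, which would require care with the permutation-induced dependence between the $C_n$'s), or revert to the paper's pointwise bound $c'C_{\ge 3}$ to recover Eq.~\eqref{eq:uuubound} as stated.
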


\begin{proof}[Correctness error --- Proof sketch]
  Following Theorem~\ref{thm:decoy-realization}, the correctness error is upper bounded by the probability of Game~\ref{game:correctness} to return $\Abort$. This can happen if $|I|<K$ or if Algorithm~$\mathcal B$ returns $\Abort$. The probability of each event is upper bounded separately. The corresponding bounds rely on the following steps: (i) define the random variable $P = D_0 + D_1 + D_2 + D_{\geq 3}$ where $D_i$ (resp. $D_{\geq i}$) is the number of acknowledged pulses with intensity $\nu$ with $i$ photons (resp. $\geq i$ photons). Likewise, define $P'=D_0' + D_1' + D_2' + D_{\geq 3}'$ for pulses with intensity $\nu'$. (ii) Note that $P$ (resp. $P'$) is binomially distributed with parameter $N/2$ and $1-e^{-\eta\nu}$ (resp. $N/2$ and $1-e^{-\eta\nu'}$), so that $2P/N$ and $2P'/N$ will concentrate exponentially fast with $N$ around their average using Hoeffding's bound. (iii) $I$ defined in Game~\ref{game:correctness} satisfies $|I| = P + P'$ as an honest Receiver should never acknowledge a zero-photon pulse (i.e. $D_0 = D_0' =0$) and should acknowledge all other pulses.
\end{proof}

\begin{proof}[Security error --- Proof sketch]
  Following Theorem~\ref{thm:decoy-realization}, the security error $\varepsilon_{\text{sec}}$ is upper bounded by the probability that Game~\ref{game:security} returns $\Fail$. With the notation used earlier, this happens whenever $D_0 + D_0' + D_1 + D_1' = 0$ (i.e.~the Receiver never acknowledges pulses with less than 2 photons) while Algorithm~$\mathcal B$ returns $\Acc$. The sought probability conditioned on $D_0 +  D_0' + D_1 + D_1' = 0$ is thus
  \begin{align}
    &\Pr \Biggl[ \frac{c'P - cP}{bc'-b'c} \label{eq:e_sec} \\
    &\quad \geq \left(\frac{c'(1-e^{-\eta\nu}) - c(1-e^{-\eta\nu'})}{bc'-b'c} - \Delta_0\right) \frac{N}{2} \Biggr], \nonumber
  \end{align}
where the conditioning event imposes $P = D_2 + D_{\geq 3}$ and $P' = D_2' + D_{\geq 3}'$. In Game~\ref{game:security}, the possibly malicious Receiver is given $(c_n)_n$, the number of $n$-photon pulses transmitted by Sender, and returns $(d_n)_n$, the number of acknowledged pulses with $n$-photons. By construction $c_n$ is the sum of two binomially distributed variables $C_n$ and $C_n'$ each being the number of $n$-photon pulses within $N/2$ pulses with intensity $\nu$ and $\nu'$ respectively. That is $C_n \sim B(N/2, \nu^n e^{-\nu} / n!)$  and $C_n' \sim B(N/2, {\nu'}^n e^{-\nu'} / n!)$ where $B$ denotes the binomial distribution. Similarly, while $d_n$ is fixed by the Receiver, it is the sum of the two random variables $D_n$ and $D_n'$ introduced in the previous proof. More precisely, $D_n$ is the number of acknowledged pulses with $n$ photons that correspond to intensity $\nu$. Since $n$-photon pulses with intensity $\nu$ are indistinguishable from those with intensity $\nu'$, we can conclude that $D_n$ follows an hypergeometric distribution $H(c_n,d_n,C_n)$. Respectively we have $D_n'\sim H(c_n,d_n,C_n')$.

The proof proceeds with the following steps. (i) Note that $D_{\geq 3} \leq C_{\leq 3}$ and $0 \leq D_{\geq 3}'$ so that the probability in Eq.~\ref{eq:e_sec} can be upper bounded by
\begin{align}
  &\Pr \Biggl[ \frac{c'D_2 - cD_2' + c'C_{\geq 3}}{bc'-b'c} \label{eq:ubound} \\
  &\quad \geq \left(\frac{c'(1-e^{-\eta\nu}) - c(1-e^{-\eta\nu'})}{bc'-b'c} - \Delta_0\right) \frac{N}{2}  \Biggr]. \nonumber
\end{align}
(ii) Use tail bounds for the hypergeometric distribution to show that $D_2/C_2$ and $D_2'/C_2'$ concentrate around $d_2/c_2$, their common average value. (iii) Use tail bounds for the binomial distribution to show that $C_2$ and $C_2'$ concentrate around $cN/2$ and $c'N/2$ respectively. (iv) Use $\frac{d_2}{c_2} \times c \frac{N}{2}$ instead of $D_2$ and $\frac{d_2}{c_2} \times c' \frac{N}{2}$ instead of $D_2'$ in Eq.~\ref{eq:ubound} to show that when $D_2, D_2', C_2, C_2'$ are close to their expected values, $\varepsilon_{\text{sec}}$ is upper bounded by
\begin{align}
  & \Pr \Biggl[ \frac{c'C_{\geq 3}}{bc'-b'c} \label{eq:uubound} \\
  & \geq \left(\frac{c'(1-e^{-\eta\nu}) - c(1-e^{-\eta\nu'})}{bc'-b'c} - (\Delta_0 + \Delta_0')\right)\frac{N}{2} \Biggr], \nonumber
\end{align}
for $\Delta_0$ used in Algorithm~$\mathcal B$ and any constant $\Delta_0' > 0$.
(v) Use that $C_{\geq 3}$ concentrates around $(1-a-b-c) \frac{N}{2}$ to show  that the probability in Eq.~\ref{eq:uubound} is negligible in ${\Delta_0''}^2 N$ whenever there exists $\Delta_0''$ a positive constant such that Eq.~\ref{eq:uuubound} is satisfied.

The proof is then concluded by combining the above case where the random variables are close to their expectation values, with the ones where they don't and for which the conditional probability of Algorithm~$\mathcal B$ to return $\Acc$ can be safely bounded by 1 as these events happen with probability  negligible in $N$.
\end{proof}
Our security analysis allows to refine the scaling of the number of pulses with the transmittance of the channel in the low-transmittance regime.
\begin{theorem}[Scaling for $\eta \rightarrow 0$]
  \label{thm:scaling}
  \newcounter{count:scaling}
  \setcounterref{count:scaling}{thm:scaling}
  For two pulses of intensities $\nu, \nu'$  with $\nu  =  \alpha\nu'$ and $0<\alpha<1$, the number of pulses needed to obtain a given correctness and security error scales as $1/\eta^{2}$ for $\eta \rightarrow 0$.
  \end{theorem}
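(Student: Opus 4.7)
The plan is to choose the intensity $\nu'$ as an explicit function of $\eta$, substitute into Theorem~\ref{thm:instantiation}, and count the resulting number of pulses needed.

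I would take $\nu' = c_0 \sqrt{\eta}$ (hence $\nu = \alpha c_0 \sqrt{\eta}$) for a constant $c_0 \in (0, \sqrt{6(1-\alpha)}/\alpha)$. Expanding the probabilities $a, b, c$ and their primed counterparts to leading order in $\nu'$ gives $bc'-b'c = \Theta(\eta^{3/2})$, $C = \Theta(\eta)$, $(bc'-b'c)^2/C^2 = \Theta(\eta)$, and $1-a-b-c = \Theta(\eta^{3/2})$. A direct expansion of the numerator of the right-hand side of Eq.~\ref{eq:uuubound} shows that it is $\Theta(\eta^{5/2})$ with a positive constant prefactor if and only if $\alpha^2 c_0^2 < 6(1-\alpha)$, making the right-hand side of Eq.~\ref{eq:uuubound} $\Theta(\eta)$. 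This leaves enough budget to pick $\Delta_0, \Delta_0' = \Theta(\eta)$ and $\Delta_0'' = \Theta(\eta^{3/2})$ consistent with Eq.~\ref{eq:uuubound}.

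With these choices, a direct substitution of the Hoeffding-type bounds stated in Theorem~\ref{thm:instantiation} yields correctness and security exponents both of order $\eta^3 N$, suggesting only $N = \Theta(1/\eta^3)$. To push the scaling down to $1/\eta^2$, I would re-derive the relevant steps in the proof of Theorem~\ref{thm:instantiation} using Bernstein's inequality (or the multiplicative Chernoff bound) in place of Hoeffding's. Indeed, the variances of the binomial variables underlying the three exponential bounds are $\Theta(\eta N)$ for the weighted sum defining $T$ and $\Theta(\eta^{3/2} N)$ for $C_{\geq 3}$ and for the total count $P+P'$, instead of $\Theta(N)$. Plugging these variances into the Bernstein or Chernoff form gains a factor of $\eta$ in the denominator of each exponent, so that each of the three becomes at least $\Theta(\eta^2 N)$. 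Requiring them to exceed $\log(1/\varepsilon)$ then gives $N = \Theta(1/\eta^2)$, matching the claim.

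The main obstacle is to upgrade each Hoeffding bound in the proof of Theorem~\ref{thm:instantiation} to a Bernstein/Chernoff bound while preserving the decomposition via $\Delta_0, \Delta_0', \Delta_0''$. In particular, the hypergeometric concentrations of $D_n$ around $d_n C_n/c_n$ controlled by $\Delta_0'$ in step~(iv) of the security proof must be handled with a variance-aware bound, and one needs to verify that the allocation saturating Eq.~\ref{eq:uuubound} remains consistent with positive constants after all refinements. Once this is done, the bound $N = O(1/\eta^2)$ follows by taking the worst of the three improved exponents and solving for $N$.
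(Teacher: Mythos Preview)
Your route differs substantially from the paper's. The paper does \emph{not} scale the intensities with $\eta$: it keeps $\nu'$ (and hence $\nu=\alpha\nu'$) \emph{constant} as $\eta\to 0$. With that choice $a,b,c,a',b',c'$, and in particular $bc'-b'c$ and $C=\max(c,c')$, are all $\eta$-independent. Expanding the constraint in Theorem~\ref{thm:instantiation} to leading order the paper obtains $\Delta_0'' \simeq (\eta-(\Delta_0+\Delta_0'))\,\nu'\alpha(1-\alpha)$, so one may take $\Delta_0,\Delta_0'\lesssim \eta/3$ (which, via the definition of $\Delta_0'$, forces $\delta_0,\delta_0',\gamma_0,\gamma_0'\propto\eta$) and $\delta\propto\eta$; then $\Delta_0''\propto\eta$ as well. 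Every exponent appearing in the Hoeffding-type bounds of Theorem~\ref{thm:instantiation} is therefore of the form $\kappa\,\eta^2 N$ with $\kappa$ independent of $\eta$, and the $1/\eta^2$ scaling follows immediately --- no Bernstein or Chernoff refinement is invoked.

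By contrast, your choice $\nu'=c_0\sqrt{\eta}$ forces $c,c'=\Theta(\eta)$ and $bc'-b'c=\Theta(\eta^{3/2})$, which is precisely what degrades the Hoeffding exponents to $\eta^3 N$ and then obliges you to reopen the proof of Theorem~\ref{thm:instantiation} and replace each concentration step with a variance-aware bound. That programme may be feasible, but it is a detour: the variance bookkeeping would have to be done carefully term by term (for instance, the bottleneck turns out to be the $C_2$ and $D_2$ concentrations rather than the three bounds you single out, and the gains from Bernstein are not uniformly ``a factor of $\eta$'' across all steps). The paper sidesteps all of this with the single observation that fixed intensities keep $(bc'-b'c)^2/C^2$ of order one.
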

The detailed proof of Theorem~\ref{thm:scaling} is given in Appendix~\ref{sec:app-scaling}.

\begin{proof}[Scaling --- Proof Sketch]
  To obtain the scaling of $N$ with respect to $\eta$, one needs to examine all the conditions for obtaining correctness and security as $\eta \rightarrow 0$. The goal is to first characterize how $\nu$ and $\nu'$, the two pulse intensities, so that these conditions are met, and then to evaluate how this impacts the bounds on the correctness and security errors.

  The full proof given in Appendix~\ref{sec:app-scaling} shows that for small enough $\eta$, $\nu$ and $\nu'$ can made constant, and that the bounds on the correctness and security error are of the form $\exp(-\kappa \eta^2 N)$ for some $\kappa$ independent of $\eta$ and $N$. This then implies that AC-security in the low $\eta$ regime can be obtained for $N \propto 1/\eta^2$.
\end{proof}

\paragraph{Maximal reachable intensity.}
Beyond the advantageous scaling of our protocol at low transmittance, another figure of merit for such protocols is the highest intensity they can accommodate while remaining secure.

We consider a regime where the Receiver is malicious and all quantities have converged to their expected values. In this situation, the only quantity a malicious Receiver can use to reproduce the expected statistics is $c' D_{\geq 3} -c D'_{\geq 3}$ which is upper bounded by $c'C_{\geq 3}$.
Hence a stricter security condition based on this upper bound is
\begin{align}
  \frac{N}{2}\;\frac{c'(1-e^{-\eta_0 \nu}) - c\,(1-e^{-\eta_0 \nu'})}{b c'-b' c} > \frac{c'C_{\geq 3}}{b c'-b' c}
  \label{eq:GLMO_malicious}
\end{align}
where $\eta_0$ denotes the Sender's lower bound on the channel transmittance. Parameterising the two intensities as $\nu = \alpha \nu'$ with $0 < \alpha < 1$ as in Theorem~\ref{thm:scaling} and solving the equality case of Eq.~(\ref{eq:GLMO_malicious}), yields the
maximal intensity value $\nu^*_{\rm GLMO}$ of $\nu'$ at which security breaks down.

The same reasoning can be applied to the single-pulse protocol of \cite{DKL11universal}. In the same setting, the security criterion reads
\begin{align}
  N\,\left(1-e^{-\eta_0 \nu}\right) > C_{\geq 2}
  \label{eq:DKL_malicious}
\end{align}
and we call $\nu^*_{\rm DKL}$ the value of $\nu$ at which Eq.~(\ref{eq:DKL_malicious}) fails.

In the following, we compare the two values of the maximal reachable intensity. We numerically compute $\nu^*_{\rm GLMO}$ while $\nu^*_{\rm DKL}$ can be found analytically. It reads
\begin{align}
  \nu^*_{\rm DKL} = \frac{W_{-1}\left((\eta_0 - 1)e^{\eta_0 - 1}\right)}{\eta_0 - 1} - 1
\end{align}
where $W_{-1}$ is the negative branch of the Lambert function.

Figure~\ref{fig:fig_eta} shows $\nu^*_{\rm DKL}$ and $\nu^*_{\rm GLMO}$ as a function of $\eta_0$ for fixed $\alpha = 0.5$. It shows that at low transmittance, our protocol accommodates higher values of intensity.
\begin{figure}[ht]
  \centering
  \includegraphics[width=0.7\linewidth]{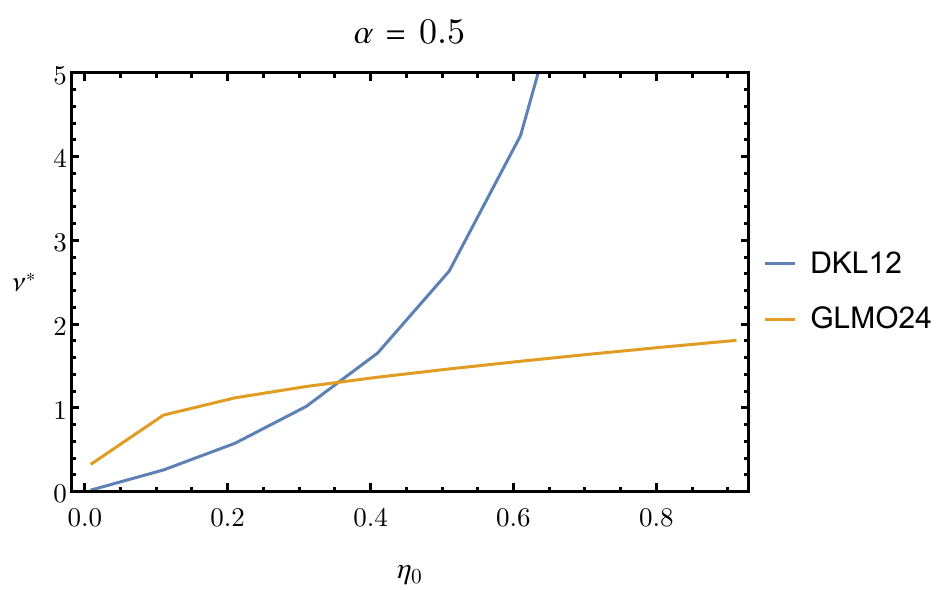}
  \caption{Maximal intensity value $\nu^*$ allowed by both protocols as a function of $\eta_0$ for fixed $\alpha = 0.5$. The blue curve represents the results for the protocol of \cite{DKL11universal} and the orange one the results for our protocol. This plot shows that at low transmittance, our protocol remains secure for higher intensities values.}
  \label{fig:fig_eta}
\end{figure}
Figure~\ref{fig:fig_alpha} shows the same quantities for fixed $\eta_0 = 0.2$ and varying $\alpha$.
\begin{figure}[ht]
  \centering
  \includegraphics[width=0.7\linewidth]{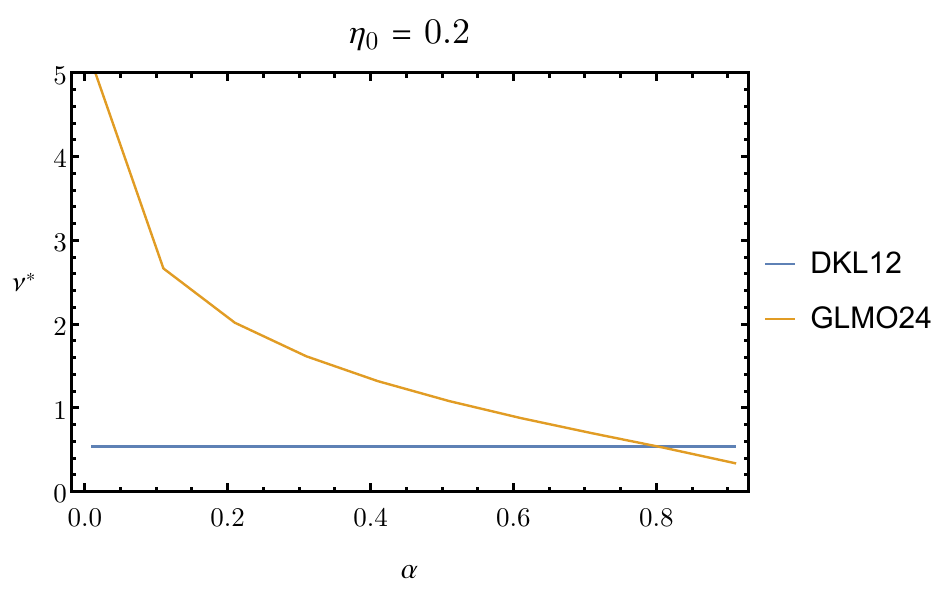}
  \caption{Maximal intensity value $\nu^*$ allowed by both protocols as a function of $\alpha$ for fixed $\eta_0= 0.2$. The blue curve represents the results for the protocol of \cite{DKL11universal} and the orange one the results for our protocol. This plot shows that at low transmittance, our protocol remains secure for higher intensities values.}
  \label{fig:fig_alpha}
\end{figure}
Figure~\ref{fig:density_plot} shows which protocol has the highest $\eta^*$ value as a function of both $\alpha$ and $\eta_0$.
\begin{figure}[ht]
  \centering
  \includegraphics[width=0.7\linewidth]{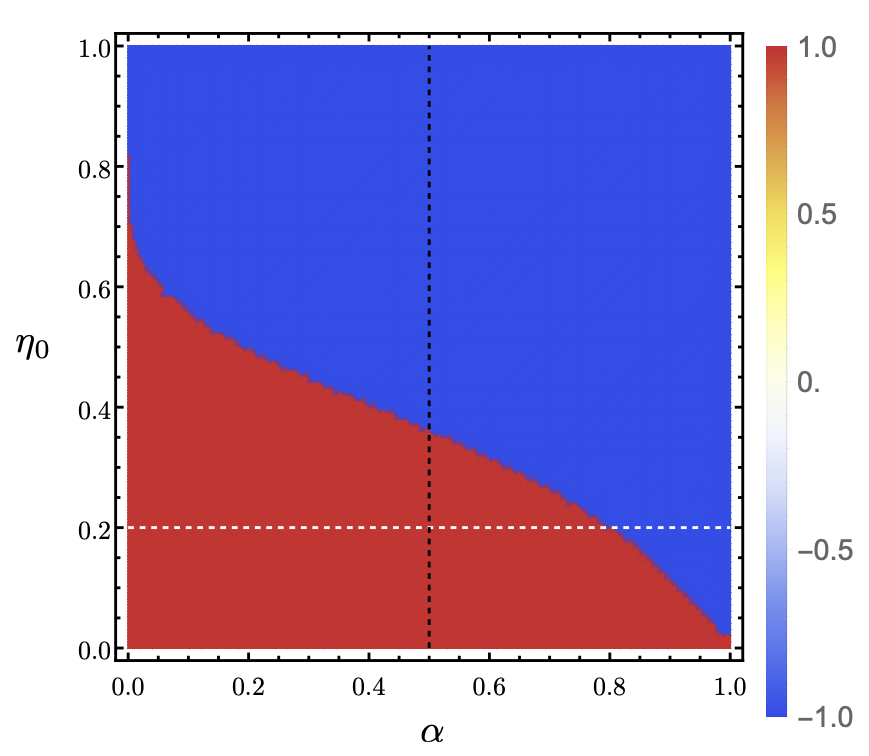}
  \caption{In the $(\alpha,\eta_0)$ plane, the colour shows which protocol accommodates the highest $\nu^*$ value: red for our protocol and blue for the \cite{DKL11universal} one. The black (resp. white) dashed line represent the line along which Fig.~\ref{fig:fig_eta} (resp. Fig.~\ref{fig:fig_alpha}) is plotted.}
  \label{fig:density_plot}
\end{figure}
These results show that depending on the estimated channel transmittance and the available resources (laser power, number of pulses), one can choose either of the protocols.
The precise exploration of all parameter trade-offs, the optimisation of the protocol and the its practical implementation are left for future work.

\section{Discussion}
\label{sec:discussion}
\paragraph{Intuition for the accept criterion of Algorithm $\mathcal B$}
To obtain an intuition behind the choice made in Equation~\ref{eq:T}, we can simplify the problem by considering that there are no statistical finite-size fluctuations. We next consider the optimal attack by a malicious receiver: the replacement of the lossy fiber by a perfect one, the rejection of all single-photon pulses, which leaves the two-and-more photon pulses to satisfy the acceptance criteria set by the protocol. Next we can examine what his attack could look like for pulses with exactly two photons. Because it does not know whether a given two-photon pulse corresponds to intensity $\nu$  or $\nu'$, the most it can do  will be  to set a ratio $\beta$ of two-photon pulses that will be acknowledged as received, the rest being declared as not received. Hence, the number of acknowledged 2-photon pulses emitted with intensity $\nu$ will be $\beta c \frac{N}{2}$, while those emitted with intensity $\nu'$ will be $\beta c' \frac{N}{2}$.

We can now remember the definition of the observed quantity in Equation~\ref{eq:T} that is computed by the sender:
\begin{equation}
  T = \frac{1}{bc'-b'c}\left(c'P - cP' \right),
\end{equation}
so that the contribution of 2-photon pulses reported by the receiver is equal to $\beta(c c' - c' c)/(bc' - b'c) = 0$.  In other words, the choice of $T$ is crafted in such a way that, whatever the strategy of the attacker, the contribution from the most numerous events to reaching $t$, the honest but lossy value of $T$, is always 0. The malicious reciever will thus have to reach $t$ with 3-and-more photon pulses, which is harder to achieve and thus explains the better scaling behavior for low $\eta$ of our protocol when compared to~\cite{DKL11universal}.

The proof in Appendix~\ref{proof:ac-security} makes these arguments rigorous by taking into account finite-size effects and by showing that in spite of the coherent nature of the attack, one can still assign a ratio $\beta$ for the 2-photon behavior up to statistical fluctuations, that again need to be properly accounted for.

The above argument also indicates how to generalize the protocol to more pulses and achieve improved scaling. We leave this generalization for future work.

\paragraph{Improved Security}
We close possible gaps in standalone security proofs and provide solid foundations for other WCP-based protocols by showing the composable security of our protocol. Furthermore our work doesn't assume that the source perfectly follows the Poisson distribution, an assumption that is made by some other composable security proofs.
The composable security ofthe $\mathsf{BatchRSP}$ resource implies that it can be combined with different privacy amplification techniques (like the classical one used for QKD) without requiring a new proof.

\paragraph{Improved practicality}
As far as implementability is concerned, using decoy-state-based methods yields an immediate advantage in terms of achievable RSP rates.  Indeed, not using decoys requires to send a number of pulses that grows as $1/\eta^4$ in order to send a single secure state~\cite{DKL11universal}, while decoy states allow to reduce the scaling to $1/\eta$~\cite{JWHX19remote}. Our technique yields an intermediate scaling of $1/\eta^2$ when two different intensities are used with additional security and verifiability benefits.  However, a crucial difference between our method and the traditional decoy state one is that we use all single-photon states to transmit quantum information and not only from a single intensity. Moreover, an important property of our protocol is that it can be streamed, i.e.~the server does not need to wait for information from the client before acting on the received pulses. Indeed, that is th case due to the fact that the estimation procedure only depends on the photon number in the pulse and not on the actual single-qubit state that is encoded into the pulse. Finally, the implementation of our protocol doesn't require additional resources compared to \cite{DKL11universal,JWHX19remote}. We believe these properties pave the way for the experimental demonstration of longer delegated quantum computations

\paragraph{Improved security for QKD}
Standard QKD and RSP protocols share the same vulnerability if the single-photon sources are replaced by weak coherent pulses. Multiple photon pulses could in principle be leveraged by an attacker to compromise the secrecy of the scheme and recover either some bits of the key (for QKD) or the value of the quantum One-Time-Pad used to blind the computation (for RSP). As a matter of fact a successful photon-number-splitting attack works similarly in both cases. For the former, it consists of (i) replacing the noisy fiber between the sender, the eavesdropper and the receiver, and (ii) have the eavesdropper intercept all single-photon pulses while trying to mimick the lossy fiber with the rest of the pulses, while storing extraneous photons until the reconciliation phase. In the latter, it consists of (i) replacing the noisy fiber between the client and server's location with a noiseless one, and (ii) acknowledging a successful transmission only whenever the received pulse contains two or more photons.

The applicability of our security proof to both cases results from (i) the ability to capture the impactful imperfections of WCP sources through the same model (Section~\ref{sec:wcpg}), (ii) the similarity  of the optimal attacks and (iii) the composability of the proof which allows to choose either a classical or quantum secure privacy amplification.

As was examplified in Section~\ref{sec:example}, to concretely prove the security of the scheme, one has to define an accept/reject criterion and show that the accept case can only be passed with negligible probability while performing a successful attack. We showed in Section~\ref{sec:batch-rsp} that the most general coherent attack amount to defined for each $n \in \mathbb N$, how many of the $n$-photon pulses emitted by the sender will be acknowledged by the receiver.  While it is expected that for several intensities the accepted pulses for a given $n$  will be spread throughout the various intensities according to the relative frequencies of their $n$ photon states, it cannot be positively claimed until the empirical averages have converged toward their expectation values. This is a simplification that is, to the best of our knowledge, usually made when proving the security of decoy-sate QKD, but which cannot be justified especially for high $n$. In contrast, Section~\ref{sec:example} does not make this assumption. This comes at the price of a lengthier proof and the worse scaling $1/\eta^2$ instead of $1/\eta$. Yet, a positive consequence of our proof technique is that it is robust to cahnges in the photon number distribution of the pulses (cf. Remark~\ref{rem:robustness}).

\paragraph{Acknowledgements}
The authors would like to thank P. Hilaire, U. Chabaud and N. Fabre for useful discussions. DL acknowledges support by the Engineering and Physical Sciences Research Council [grant reference EP/T001062/1], and the UK National Quantum Computer Centre [NQCC200921], which is a UKRI Centre and part of the UK National Quantum Technologies Programme (NQTP). This work has been co-funded by the European Commission as part of the EIC accelerator program under the grant agreement 190188855 for SEPOQC project, by the Horizon-CL4 program under the grant agreement 101135288 for EPIQUE project, by the  Horizon-CL4 program under Grant agreement 101102140 for QIA-Phase 1 project, by ANR research grant ANR-21-CE47-0014 (SecNISQ), and by HQI Initiative supported by France 2030 under ANR grant ANR-22-PNCQ-0002.

\bibliographystyle{alpha}
\bibliography{../qubib/qubib.bib}

\newpage
\appendix

\section{Proof of Theorem~\ref{thm:decoy-realization}}
\label{app:proof-thm-protocol}

We prove here the result from Section~\ref{sec:batch-rsp}, restated below.
\newcounter{temp-count}
\setcounter{temp-count}{\value{theorem}}
\setcounter{theorem}{\value{count:decoy-realization}-1}
\begin{theorem}
  For given transmittance $\eta \in [0,1]$, parameters $N,K \in \mathbb{N}^+$, $\mu_1, \ldots, \mu_N \in \mathbb{R}^+_0$, and algorithm $\mathcal{B}$, Protocol~\ref{proto:decoy-state-method} $\varepsilon$-statistically constructs Resource~\ref{res:batchrsp} ($\mathsf{BatchRSP}$) from Resource~\ref{res:wcp} ($\mathsf{WCPGenerator}$) if Game~\ref{game:correctness} returns $\Abort$ with probability at most $\varepsilon$, and Game~\ref{game:security} returns $\Fail$ with probability at most $\varepsilon$ for all maps $\mathcal{D}$.
\end{theorem}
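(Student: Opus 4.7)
The plan is to prove the AC-construction statement in two parts — honest-run correctness and simulation-based indistinguishability against a malicious receiver — each reduced to the abort/failure event of the corresponding classical game. For correctness, in the $c=0$ branch the $\mathsf{WCPGenerator}$ outputs $n_i \sim \Pois(\eta \mu_i)$ photons in round $i$, so after sampling the sender's permutation $\pi$, the joint law of the non-vacuum index set $\tilde I$ and of $I$ matches exactly the one generated by Game~\ref{game:correctness}. I would first observe that the protocol aborts on the union of the two events listed in that game (insufficient acknowledgements or $\mathcal B(I) = \Abort$), and then verify by direct substitution that when no abort occurs the corrections $\tilde{\cptp U}_j = \cptp U_{\sigma(j)}{\cptp U'_j}^\dagger$ applied to the stored single photons $\cptp U'_j(\rho_0)$ reconstruct $\cptp U_{\sigma(j)}(\rho_0)$, which after relabelling by $\sigma^{-1}$ coincides with the honest output of $\mathsf{BatchRSP}$. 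The statistical distance between the two honest systems is then bounded above by $\Pr[\text{GameCor} = \Abort] \le \varepsilon$.

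For security, the plan is to construct a simulator $\mathsf{Sim}$ that, interacting with $\mathsf{BatchRSP}$ in its malicious branch, reproduces the dishonest receiver's view of the real protocol. $\mathsf{Sim}$ proceeds in three stages. First, it emulates the adversarial view of $\mathsf{WCPGenerator}$: for each round $i$ it samples $n_i \sim \Pois(\mu_i)$ and a Haar-random $\cptp U'_i \in \mathcal U$, then delivers $(n_i, \cptp U'_i)$ when $n_i > 1$, the state $\cptp U'_i(\rho_0)$ when $n_i = 1$, and nothing when $n_i = 0$, while internally retaining one half of a maximally entangled pair in place of each single-photon pulse so that its value can be fixed a posteriori. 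Second, it reads $I'$ from the receiver, halts if $|I'| < K$, undoes $\pi$ to produce $I$, and runs $\mathcal B$; upon acceptance it forwards the sender's inputs $(\cptp U_1, \ldots, \cptp U_K)$ to $\mathsf{BatchRSP}$ and obtains $k$, $\{\cptp U_j\}_{j \ne k}$, and the state $\cptp U_k(\rho_0)$. Third, it chooses $\sigma$ that maps the hidden slot $k$ to some single-photon (or vacuum-acknowledged) index $j^\star \in I'$, computes $\tilde{\cptp U}_j$ classically for all $j \ne j^\star$, and for $j^\star$ performs a group-covariant teleportation on its retained entangled half together with $\cptp U_k(\rho_0)$, sending the Haar-uniform measurement outcome as $\tilde{\cptp U}_{j^\star}$ so that the receiver's post-correction state at that slot equals $\cptp U_k(\rho_0)$.

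The error analysis then proceeds in two steps. First, I would argue by symmetrization that the sender-side randomisation ($\pi$ together with the independent Haar-random $\cptp U'_i$) hides all information beyond the unordered multiset of photon counts, so without loss of generality the receiver's strategy for producing $I'$ factors through a classical map $\mathcal D$ sending the profile $(c_n)_n$ to acknowledged counts $(d_n)_n$ and then picking a uniformly random subset of size $d_n$ within each stratum — exactly the template of Game~\ref{game:security}. Under this reduction the simulation is perfect whenever a valid $j^\star$ exists in $I'$, which is equivalent to $d_0 + d_1 \ge 1$; otherwise $\mathsf{Sim}$ cannot place $\cptp U_k(\rho_0)$ into a slot the receiver lacks classical information about, and the simulation fails. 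Conditioning on $\mathcal B$ accepting, the resulting failure probability is therefore at most $\Pr[\text{GameSim}(\mathcal D) = \Fail] \le \varepsilon$ uniformly in $\mathcal D$, giving the claimed $\varepsilon$-close construction. I expect the main obstacle to be the rigorous symmetrization argument tracking quantum side information across rounds and showing that no coherent adaptive strategy can exploit correlations beyond the photon-count bookkeeping; a secondary subtlety is setting up the group-covariant teleportation for a general group $\mathcal U$ closed under the Pauli one-time pad, which relies on the 1-design property inherited from the presence of $\X$ and $\Z$ to ensure that the teleportation outcome is Haar-uniform on $\mathcal U$ with the correct joint distribution with the pulse.
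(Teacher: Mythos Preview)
Your proposal is correct and follows essentially the same approach as the paper: both split into correctness (matching abort events to Game~\ref{game:correctness} and verifying the corrections reconstruct the ideal output) and security via an explicit simulator that uses EPR-pair placeholders for single-photon rounds, teleports the state received from $\mathsf{BatchRSP}$ into a randomly chosen $n\le 1$ slot, and fails precisely when no such slot exists in $I'$. The paper carries out your symmetrization step through a short sequence of hybrid games that strip the adversary's interface down to the photon-count profile $(c_n)_n$ and its response $(d_n)_n$, and handles your teleportation subtlety by first replacing teleportation with direct preparation and then substituting the Haar-random $\cptp U_i'$; one minor wording point is that the simulator does not forward the sender's inputs but rather queries $\mathsf{BatchRSP}$ on the malicious interface with $c=1$.
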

\setcounter{theorem}{\value{temp-count}}

We prove the correctness and security separately.

\begin{lemma}[Correctness of the Decoy State Method]\label{lemma:decoy-correctness}
  For given transmittance $\eta \in [0,1]$, parameters $N,K \in \mathbb{N}^+$, $\mu_1, \dots, \mu_N \in \mathbb{R}^+_0$, and algorithm $\mathcal{B}$, Protocol~\ref{proto:decoy-state-method} $\varepsilon$-correctly realises Resource~\ref{res:batchrsp} ($\mathsf{BatchRSP}$) from Resource~\ref{res:wcp} ($\mathsf{WCPGenerator}$) if Game~\ref{game:correctness} returns $\Abort$ with probability at most $\varepsilon$.
\end{lemma}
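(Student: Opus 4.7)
The plan is to exhibit a direct coupling between the honest execution of Protocol~\ref{proto:decoy-state-method} and the honest behaviour of Resource~\ref{res:batchrsp}, isolating the abort event as the sole source of distinguishing advantage, and then identifying its probability with that of Game~\ref{game:correctness}. Since both the Sender and the Receiver are honest here, there is no simulator to construct; correctness reduces to bounding the trace distance between the outputs of the two systems, and this is majorised by the probability that the real protocol aborts.

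First, I would verify \emph{functional} correctness of the non-abort branch. Conditioned on not aborting, each surviving index $j$ corresponds to a round in which $\mathsf{WCPGenerator}$ produced at least one copy of $\cptp U_j'(\rho_0)$, so the Receiver's stored qubit is exactly $\cptp U_j'(\rho_0)$. Applying the correction $\tilde{\cptp U}_j = \cptp U_{\sigma(j)} (\cptp U_j')^\dagger$ yields $\cptp U_{\sigma(j)}(\rho_0)$, and the final permutation by $\sigma^{-1}$ places $\cptp U_i(\rho_0)$ into output slot $i$. This matches the honest output of $\mathsf{BatchRSP}$ exactly and is independent of the joint distribution of $(\pi, \sigma, \cptp U_1', \ldots, \cptp U_N')$, so the output distribution on the non-abort event coincides with that of the ideal resource.

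Next, I would identify the abort event with Game~\ref{game:correctness}. In Protocol~\ref{proto:decoy-state-method} the $i$-th pulse has photon count $n_i \sim \Pois(\mu_{\pi(i)} \eta)$ with $\pi$ uniform on $\mathrm{S}_N$ and independent of every other random choice. Substituting $m_j := n_{\pi^{-1}(j)}$, the $m_j$ are independent with $m_j \sim \Pois(\mu_j \eta)$, matching the photon-count distribution in Game~\ref{game:correctness}. The image $\pi^{-1}(\tilde I)$ of the Receiver's kept-index set therefore coincides in distribution with the Game's $\tilde I$, and $I = \pi^{-1}(I')$ is a uniformly random size-$K$ subset of $\pi^{-1}(\tilde I)$, exactly as in Game~\ref{game:correctness}. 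Consequently the two abort triggers of Protocol~\ref{proto:decoy-state-method}, namely $|\tilde I| < K$ and $\mathcal B(I) = \Abort$, occur simultaneously with the corresponding abort branches of Game~\ref{game:correctness}.

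Combining these two observations, the honest real execution and the honest ideal execution are identical on the non-abort event, and they differ on an event of probability equal to $\Pr[\text{Game~\ref{game:correctness}} \to \Abort]$, which is at most $\varepsilon$ by hypothesis. The trace-distance bound then yields $\varepsilon$-correctness. The only step needing real care is the re-indexing via $\pi^{-1}$ in the second paragraph: one has to exploit the independence of $\pi$ from the Poisson-distributed photon counts supplied by $\mathsf{WCPGenerator}$ and from the Haar-random unitaries $\cptp U_j'$ to rewrite the joint distribution of the protocol's state into the form sampled by Game~\ref{game:correctness}. This is the main bookkeeping obstacle, but because the Receiver is honest there is no adversarial strategy to worry about and the argument reduces to a change of variables.
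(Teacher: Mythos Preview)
Your proposal is correct and follows essentially the same approach as the paper's own proof: verify that the non-abort branch produces exactly the ideal output, identify the protocol's abort event with that of Game~\ref{game:correctness}, and conclude that the correctness error equals the abort probability. The paper's proof is terser on the re-indexing step---it simply asserts that the abort probabilities coincide---whereas you spell out the change of variables $m_j := n_{\pi^{-1}(j)}$ explicitly, which is a welcome clarification but not a different argument.
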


\begin{proof}[Proof of Correctness]
  We show that the output state of Protocol~\ref{proto:decoy-state-method} with honest Sender and Receiver is the same as the one produced by the Batch RSP (Resource \ref{res:batchrsp}). 
  
  The probability that Game~\ref{game:correctness} returns $\Abort$ is exactly equal to the probability that the Sender or Receiver abort Protocol~\ref{proto:decoy-state-method}.
  
  We then analyse the behaviour of Protocol~\ref{proto:decoy-state-method} conditioned on the event that it does not abort. The states held by the Receiver after the acknowledgement of the WCP reception are of the form $( \cptp U_i' (\rho_0) )_{i\in I'}$. Assuming that the Sender's decoy estimation subroutine accepts, the corrections by the Receiver are of the form $( \cptp U_{\sigma(j)} \cptp U_j'^\dagger )_{j = 1,\dots,K}$ after the relabelling. After the application of these corrections, the Receiver's output takes the form $( \cptp U_{\sigma(j)} (\rho_0) )_{j = 1,\dots,K}$. After permuting these states by $\sigma$, this yields exactly the same output as that of Resource~\ref{res:batchrsp} ($\mathsf{BatchRSP}$).
  
  Therefore the correctness error is exactly equal to the abort probability in Protocol~\ref{proto:decoy-state-method},~i.e. the probability that Game~\ref{game:correctness} returns $\Abort$.

\end{proof}

\begin{lemma}[Security of the Decoy State Method]\label{lemma:decoy-security}
	For given parameters $N,K \in \mathbb{N}^+$, $\mu_1, \dots, \mu_N \in \mathbb{R}^+_0$, and algorithm $\mathcal{B}$, Protocol~\ref{proto:decoy-state-method} $\varepsilon$-securely realises Resource~\ref{res:batchrsp} ($\mathsf{BatchRSP}$) from Resource~\ref{res:wcp} ($\mathsf{WCPGenerator}$) against a malicious Receiver if Game~\ref{game:security} returns $\Fail$ with probability at most $\varepsilon$ for all maps $\mathcal{D}$.
\end{lemma}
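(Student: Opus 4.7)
The plan is to construct an explicit simulator $\mathcal{S}$ that interacts with the ideal $\mathsf{BatchRSP}$ resource and emulates a full transcript of Protocol~\ref{proto:decoy-state-method} for the malicious Receiver, and then to show that this emulation is perfect outside a bad event whose probability is exactly what Game~\ref{game:security} bounds.

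The first step is to reduce the adversary's strategy to the form considered in Game~\ref{game:security}. The permutation $\pi$ is uniform and never revealed, and each blinding unitary $\cptp{U}_i'$ is Haar-random on $\mathcal{U}$, which contains the Pauli group since $\mathcal{U}$ is closed under the quantum one-time pad. Consequently the marginal state of every pulse is maximally mixed independently of the intensity used, and multi-photon classical leakage is a uniform element of $\mathcal{U}$ also independent of the intensity. The Receiver's view of the transmission phase is therefore exchangeable across pulses with equal photon count, so by symmetrizing over $\pi$ its acknowledgment strategy may, without loss of generality, be reduced to a map $\mathcal{D}$ that given the counts $(c_n)_n$ returns counts $(d_n)_n$ with uniformly random selection inside each $J_n$, matching exactly the law of $I$ in Game~\ref{game:security}.

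The second step is the construction of $\mathcal{S}$. Sample $\pi$, the photon counts $n_j \sample \Pois(\mu_{\pi^{-1}(j)})$, and Haar-random $\cptp{U}_j'$, and forward to the Receiver precisely what $\mathsf{WCPGenerator}$ with flag $c=1$ would produce, realizing each single-photon state $\cptp{U}_j'(\rho_0)$ as one half of an EPR pair kept purified at $\mathcal{S}$. When the Receiver returns $I'$, run $\mathcal{B}(\pi^{-1}(I'))$; if it rejects, abort. Otherwise, conditioning on the complement of Fail, there exists $j^* \in I'$ with $n_{j^*}\leq 1$. Query $\mathsf{BatchRSP}$ to obtain $(k,\{\cptp{U}_i\}_{i\neq k},\cptp{U}_k(\rho_0))$, choose a uniform $\sigma \in \operatorname{S}_K$ subject to $\sigma$ sending the relabelled index of $j^*$ to $k$, and for $j\neq j^*$ compute $\tilde{\cptp{U}}_j = \cptp{U}_{\sigma(j)}(\cptp{U}_j')^\dagger$ exactly as the honest Sender would. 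At position $j^*$, if $n_{j^*}=1$ then teleport $\cptp{U}_k(\rho_0)$ through the purified EPR half into the Receiver's register, dress the Pauli byproduct by a Haar-random twist from $\mathcal{U}$, and announce this as $\tilde{\cptp{U}}_{j^*}$; if instead $n_{j^*}=0$, the Receiver holds no register at this position, so $\mathcal{S}$ simply announces a Haar-random element of $\mathcal{U}$.

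The third step is verification. Outside Fail, for unacknowledged pulses, multi-photon pulses, and single-photon pulses with $j\neq j^*$, $\mathcal{S}$ reproduces the real distribution verbatim by identical sampling. For the pair (state delivered at $j^*$, correction $\tilde{\cptp{U}}_{j^*}$), invariance of the Haar measure under right multiplication inside $\mathcal{U}$ ensures that a Pauli byproduct dressed by a Haar twist is Haar-random on $\mathcal{U}$, matching the real law of $\cptp{U}_{\sigma(j^*)}(\cptp{U}_{j^*}')^\dagger$, while teleportation delivers $\cptp{U}_k(\rho_0)$ after the correction is applied. In the Fail event the simulator may deviate arbitrarily; by hypothesis this event has probability at most $\varepsilon$ under any $\mathcal{D}$, so the triangle inequality yields the $\varepsilon$-security bound. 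The main obstacle I expect is the third step in the embedded slot: simultaneously matching the marginal of $\tilde{\cptp{U}}_{j^*}$, the marginal of the delivered state, and their joint correlation with the post-correction output, while $\mathcal{S}$ has only quantum (not classical) access to $\cptp{U}_k$. This is exactly where the purification/teleportation trick combined with QOTP-closure of $\mathcal{U}$ is essential, and where one must carefully handle the $d_1=0$ but $d_0\geq 1$ subcase, in which the Receiver genuinely holds no register at $j^*$ yet the real correction is Haar-random from its viewpoint, letting the simulator match without ever invoking the ideal state.
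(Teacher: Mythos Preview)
Your overall architecture is the paper's: build a simulator that sends EPR halves in place of single-photon pulses, reduce the $\Fail$ probability to Game~\ref{game:security} via the permutation symmetry, and argue perfect indistinguishability outside $\Fail$. Two concrete steps in your simulator, however, do not reproduce the real-world joint distribution as written.

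First, for acknowledged positions $j \neq j^\ast$ with $n_j = 1$ you sent an EPR half during transmission, yet in the correction phase you only ``compute $\tilde{\cptp U}_j = \cptp U_{\sigma(j)}(\cptp U_j')^\dagger$ exactly as the honest Sender would'' and never act on your retained half. The Receiver's register at such a position is then the marginal of an untouched EPR pair, uncorrelated with $\tilde{\cptp U}_j$, whereas in the real world it is the pure state $\cptp U_j'(\rho_0)$ correlated with that same correction. Since for $\sigma(j)\neq k$ the simulator possesses the classical description of $\cptp U_{\sigma(j)}$, the paper closes this by preparing $\cptp U_j' \cptp U_{\sigma(j)}(\rho_0)$, teleporting it through the retained half, and folding the Pauli outcome into the announced $\tilde{\cptp U}_j$.

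Second, at position $j^\ast$ with $n_{j^\ast}=1$, teleporting the bare $\cptp U_k(\rho_0)$ leaves the Receiver holding $P\,\cptp U_k(\rho_0)$ for a uniform Pauli $P$; the only correction that then yields $\cptp U_k(\rho_0)$ is $P^\dagger$, which is not Haar on $\mathcal{U}$. ``Dressing the Pauli byproduct by a Haar-random twist'' cannot simultaneously make the announced correction Haar-distributed \emph{and} make the post-correction state equal $\cptp U_k(\rho_0)$, so the joint $(\text{delivered state},\tilde{\cptp U}_{j^\ast})$ you produce is not the real-world joint $(V(\rho_0),\cptp U_k V^\dagger)$ with $V$ Haar. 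The paper's fix is to apply the already-sampled but as-yet-unused Haar element $\cptp U_{j^\ast}'$ to the received quantum state $\rho_k$ \emph{before} teleporting; the Receiver then holds $P\,\cptp U_{j^\ast}'\,\cptp U_k(\rho_0)$, the announced correction $\cptp U_{j^\ast}'^\dagger P$ is Haar by invariance of the Haar measure, and applying it recovers $\cptp U_k(\rho_0)$. With these two repairs your argument coincides with the paper's.
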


\begin{proof}[Proof of Soundness]
  To show the security of Protocol~\ref{proto:decoy-state-method} against a malicious Receiver, we make use of the explicit construction of Simulator~\ref{sim:decoy-state-method}. We then prove that no Distinguisher can tell apart the malicious Receiver's interactions in (i) the ideal world in which the Simulator has single-query oracle access to the $\mathsf{BatchRSP}$ Resource \ref{res:batchrsp}, and (ii) the real world in which the honest Sender interacts with the $\mathsf{WCPGenerator}$ Resource~\ref{res:wcp}.
  
  The simulator informally works as follows. It first emulates the calls to the $\mathsf{WCPGenerator}$ Resource~\ref{res:wcp}. It samples a photon number $n_i$ from the Poisson distribution associated to the pulse intensity $\mu_i$ for all $N$ pulses -- permuted as in the protocol. If $n_i = 0$, it sends the vacuum state to the Receiver, if $n_i = 1$ it sends half of an EPR pair and keeps the other half, and otherwise if $n_i > 1$ it sends $n_i$ and the description a unitary $\cptp U_i'$ sampled from the Haar measure on $\mathcal{U}$. After receiving a set $I$ of accepted pulses of size $K$ (otherwise the Receiver aborts), it performs the same check as an honest Sender by verifying that algorithm $\mathcal{B}$ returns $\Acc$ on this set. At this step, if there are no indices $i \in I$ such that $n_i \leq 1$, the Simulator returns $\Fail$.\footnote{In the real protocol this corresponds to the case where the adversary is able to obtain the classical description of all the pulses they accepted, meaning that they have complete knowledge about the final state. The simulator is incapable of reproducing this behaviour.} Otherwise, it then performs the oracle call to the $\mathsf{BatchRSP}$ resource and receives and index $k \leq K$, classical description of the unitaries $\{\cptp U_i\}_{i\neq k}$, and a quantum state $\rho_k$. It chooses at random an index $i^\ast \in [1,K]$ (after relabelling) such that $n_{i^\ast} \leq 1$ and teleports the state received from the $\mathsf{BatchRSP}$ resource if $n_{i^\ast} = 1$. The correction permutation is chosen such that this state appears at index $k$ in the Receiver's system after applying it. The rest of the kept states can be perfectly emulated using the classical description of unitaries $\{\cptp U_i\}_{i\neq k}$.
  
  \begin{figure}[htp]
  \begin{simulator}[H]
    \caption{Against Malicious Receiver}
    \label{sim:decoy-state-method}
    \begin{algorithmic} 
      
      \Procedure{\textbf{Emulation of} $N$ \textbf{Calls to} $\mathsf{WCPGenerator}$}{}
      \State Sample a random permutation $\pi \sample \operatorname{S}_N$.
    	  \State Apply $\pi$ to permute the elements of the tuple $(\mu_1, \dots, \mu_N)$ to obtain $(\mu_1', \dots, \mu_N')$.
   	  \For{$i \in [1,N]$}
    	  \State Sample $\cptp U_i' \sample \mathcal{U}$ from the Haar measure on $\mathcal{U}$.
      \State Sample $n_i \sample \Pois(\mu_i')$.
      \If{$n_i = 0$}
      \State Send vacuum to the Receiver.
      \ElsIf{$n_i = 1$}
      \State Send half of an EPR-pair to the Receiver and keep the other half.
      \Else %
      \State Send $(n_i, \cptp U_i')$ to the Receiver.
      \EndIf
      \EndFor
      \EndProcedure

      \Procedure{\textbf{Emulation of the Decoy Estimation}}{}
      \State Receive $I'$ or $\Abort$ from the Receiver.
      \If{It receives $\Abort$}
      \State Send $c=1$ and $d=1$ to the $\mathsf{BatchRSP}$ resource and stop.
      \EndIf
      \State Undo the permutation, \textit{i.e.} let $I \gets \{ \pi^{-1}(i) | i \in I' \}$.
      \State Perform the decoy estimation routine by running $\mathcal{B}$ on input $I$.
      \If{$\mathcal{B}$ does not return $\Acc$}
      \State Send $c=1$ and $d=1$ to the $\mathsf{BatchRSP}$ resource and stop.
      \EndIf
      \State Relabel the kept state indexed previously by $I'$ from $1$ to $K$.
      \If{there exists no $j^\ast \in [1, K]$ such that $n_{j^\ast} \leq 1$}
      \State Return $\Fail$ and stop.
      \EndIf
      \State Send $c=1$ and $d=0$ to the $\mathsf{BatchRSP}$ resource. Receive in return index $k$, unitaries $\{\cptp U_j\}_{j\neq k}$, and a quantum state $\rho_k$.
      \State Let $j^\ast \in [1, K]$ be a random index such that $n_{j^\ast} \leq 1$.
      \State Sample a random permutation $\sigma \sample \operatorname{S}_K$ such that $\sigma(j^\ast) = k$.
      \State Apply $\cptp U_{j^\ast}'$ to $\rho_k$ to obtain $\rho_k' = \cptp U_{j^\ast}'(\rho_k)$.
      \For{$j \in [1,K]$}
      \If{$j = j^\ast$ and $n_{j^\ast} = 0$}
      \State Let $\tilde{\cptp U}_{j^\ast} = \cptp U \cptp U_{j^\ast}'^\dagger$ for a random $\cptp U$.
      \ElsIf{$j = j^\ast$ and $n_{j^\ast} = 1$}
      \State Perform a Bell measurement of the half EPR pair at position $j^\ast$ and the quantum state $\rho_k'$. Denote the outcomes of the Bell measurement by $x_{j^\ast}$ and $z_{j^\ast}$.
      \State Let $\tilde{\cptp U}_{j^\ast} = \cptp U_{j^\ast}'^\dagger \X^{x_{j^\ast}} \Z^{z_{j^\ast}}$.
      \ElsIf{$n_j = 1$}
      \State Prepare the quantum state $\cptp U_j' \cptp U_{\sigma(j)} (\rho_0)$.
      \State Perform a Bell measurement of the half EPR pair at position $j$ and this quantum state. Denote the corrections of the Bell measurement by $x_j$ and $z_j$.
      \State Let $\tilde{\cptp U}_j = \cptp U_j'^\dagger \X^{x_j} \Z^{z_j}$.
      \Else
      \State Let $\tilde{\cptp U}_j = \cptp U_{\sigma(j)} \cptp U_j'^\dagger$.
      \EndIf
      \EndFor
      \State Send the corrections $\sigma$, $(\tilde{\cptp U}_j)_{j\in [1, K]}$ to the Receiver.
      \EndProcedure
    \end{algorithmic}
  \end{simulator}
  \end{figure}
  
	In the following we distinguish between the two events denoted by $\Abort$ and $\Fail$. While $\Abort$ events indicate that the protocol rejects and hence can occur also in the real world, $\Fail$ is raised only by the simulator in the ideal world and indicates a failure of the simulator in emulating real world behaviour. These $\Fail$ events capture exactly the cases in which the Receiver successfully cheats in the real world -- all the checks pass and only pulses with two or more photons are kept -- and will represent the only difference between the ideal and the real world, and any distinguishing advantage will stem entirely from these events. Consequently, it will be a core element of this security proof to upper-bound the probability of such $\Fail$ event.
	
	The proof is split into two steps. First, we show that for any distinguisher of the real and ideal worlds, there exists a map $\mathcal{D}$ such that the probability of Simulator~\ref{sim:decoy-state-method} returning $\Fail$ equals the probability of \textsc{GameSim}($\mathcal{D}$) returning $\Fail$. The latter, in turn, is upper-bounded by $\varepsilon$ by the condition of Lemma~\ref{lemma:decoy-security}. Secondly, we show that the interaction with the simulator in the ideal world is perfectly indistinguishable from the interaction with the Sender in the real world, conditioned on the event that the simulator does not return $\Fail$. The final statement of the Lemma follows by application of the triangle inequality.
	
	\paragraph{Bounding simulation failure by Game~\ref{game:security}.}
	Regarding the first part of the proof, we will trim down the simulation to slowly exhibit games which have the same failure probability but are each closer to \textsc{GameSim}($\mathcal{D}$).
	
	We start by noting that the decision of whether or not to return $\Fail$ is made by the simulator before the $\mathsf{BatchRSP}$ resource is invoked, and is therefore independent of the target states and their descriptions. Consequently, a hybrid game with the same failure probability as the simulator can be obtained by simply removing the call to the $\mathsf{BatchRSP}$ resource and the corrections. In this new game, given the photon numbers, the Receiver (i.e. the distinguisher) can sample itself the unitaries and prepare the quantum states during the emulation of the $\mathsf{WCPGenerator}$, since these no longer depend on any hidden information. However, sampling the photon numbers must still be kept separate from the distinguisher since it depends on the intensities $\mu_i$, which must remain hidden from the distinguisher.
	
	In this way, any distinguisher interacting with the outer interfaces of Simulator~\ref{sim:decoy-state-method} and the ideal $\mathsf{BatchRSP}$ resource can be transformed into a quantum algorithm interacting with Game~\ref{game:security-hybrid}, with the same failure probability. 	As this quantum algorithm receives only classical inputs and returns only classical results, its influence on the failure probability of Game~\ref{game:security-hybrid} can be captured fully by a mathematical map $\mathcal{C}$ of $N$ integers to probabilistic distributions over the set of $K$-element subsets of $[1,N]$.
	
\begin{figure}[htp]
\begin{game}[H]
  \caption{Hybrid Security Game}
  \label{game:security-hybrid}
  \begin{algorithmic}[0]
    \State \textbf{Parameters:}
    \State $N,K \in \mathbb{N}^+$, where $K \leq N$.
    \State $\mu_1, \dots, \mu_N \in \mathbb{R}^+_0$.
    \State Algorithm $\mathcal{B}$.
    \Procedure{GameSimHybrid}{$\mathcal{C}$}
	\State Sample a random permutation $\pi \sample \operatorname{S}_N$.
    	\State Apply $\pi$ to permute the elements of the tuple $(\mu_1, \dots, \mu_N)$ to obtain $(\mu_1', \dots, \mu_N')$.
	\For{$i \in [1,N]$}
    \State Sample $n_i \sample \Pois(\mu_i')$.
    \EndFor
    \State Let $I' \gets \mathcal{C}(n_1, \dots, n_N)$ with $|I'| = K$.
    \State Let $I \gets \{ \pi^{-1}(i) | i \in I' \}$.
    \If{$\mathcal{B}(I)$ returns $\Acc$, and $\not\exists i^\ast \in I' : n_{i^\ast} \leq 1$}
    \State Return $\Fail$.
    \Else
    \State Return $\Succ$.
    \EndIf
    \EndProcedure
  \end{algorithmic}
\end{game}
\end{figure}
	
	To show that the failure probability of Game~\ref{game:security} equals that of Game~\ref{game:security-hybrid}, it suffices to see that the only information that is accessible to $\mathcal{C}$ in Game~\ref{game:security-hybrid} are the frequencies of the different photon counts, denoted by $(c_n)_{n \in \mathbb{N}}$ in Game~\ref{game:security}. This is the case because the positions of the different counts $n_i$ are randomised before passing them to $\mathcal{C}$ (and derandomised directly afterwards).
	Similarly, the only information in the output $I'$ of $\mathcal{C}$ that is not destroyed by the random permutation immediately applied elementwise to $I'$ is the number of accepted $n$-photon states, for all $n \in \mathbb{N}$, denoted by $(d_n)_{n \in \mathbb{N}}$ in Game~\ref{game:security}.
	Consequently, the map $\mathcal{C}$ from Game~\ref{game:security-hybrid} can be replaced with a more restricted map $\mathcal{D}$ which maps the frequencies of photon counts to the number of accepted states by photon count. This yields Game~\ref{game:security}, and concludes the first part of the proof.

	\paragraph{Perfect simulation in absence of failure.}
	In the following, we condition on the event that Simulator~\ref{sim:decoy-state-method} does not return $\Fail$. In this case, merging the Simulator with the $\mathsf{BatchRSP}$ resource results in Game~\ref{game:real-ideal-hybrid1}. %

  \begin{figure}[htp]
  \begin{game}[H]
    \caption{Hybrid Game Without $\Fail$}
    \label{game:real-ideal-hybrid1}
    \begin{algorithmic} 
    
      \Procedure{\textbf{Emulation of the $\mathsf{BatchRSP}$ resource}}{}
      \State Receive $K$ unitaries $\cptp U_1, \dots, \cptp U_K \in \mathcal{U}$.
      \EndProcedure
      
      \Procedure{\textbf{Emulation of} $N$ \textbf{Calls to} $\mathsf{WCPGenerator}$}{}
      \State Sample a random permutation $\pi \sample \operatorname{S}_N$.
    	  \State Apply $\pi$ to permute the elements of the tuple $(\mu_1, \dots, \mu_N)$ to obtain $(\mu_1', \dots, \mu_N')$.
   	  \For{$i \in [1,N]$}
    	  \State Sample $\cptp U_i' \sample \mathcal{U}$ from the Haar measure on $\mathcal{U}$.
      \State Sample $n_i \sample \Pois(\mu_i')$.
      \If{$n_i = 0$}
      \State Send vacuum to the Receiver.
      \ElsIf{$n_i = 1$}
      \State Send half of an EPR-pair to the Receiver and keep the other half.
      \Else %
      \State Send $(n_i, \cptp U_i')$ to the Receiver.
      \EndIf
      \EndFor
      \EndProcedure

      \Procedure{\textbf{Emulation of the Decoy Estimation}}{}
      \State Receive $I'$ from the Receiver of size $K$.
      \State Undo the permutation, i.e. let $I \gets \{ \pi^{-1}(i) | i \in I' \}$.
      \State Perform the decoy estimation routine by running $\mathcal{B}$ on input $I$.
      \If{$\mathcal{B}$ does not return $\Acc$}
      \State Send $\Abort$ to both parties.
      \EndIf
      \State Relabel the kept state from $1$ to $K$. Sample a random permutation $\sigma \sample \operatorname{S}_K$.
      \For{$j \in [1,K]$}
      \If{$n_j = 1$}
      \State Prepare the quantum state $\cptp U_j' \cptp U_{\sigma(j)} (\rho_0)$.
      \State Perform a Bell measurement of the half EPR pair at position $j$ and this quantum state. Denote the outcomes of the Bell measurement by $x_j$ and $z_j$.
      \State Let $\tilde{\cptp U}_j = \cptp U_j'^\dagger \X^{x_j} \Z^{z_j}$.
      \Else
      \State Let $\tilde{\cptp U}_j = \cptp U_{\sigma(j)} \cptp U_j'^\dagger$.
      \EndIf
      \EndFor
      \State Send the corrections $\sigma$, $(\tilde{\cptp U}_j)_{j\in [1, K]}$ to the Receiver.
      \EndProcedure
    \end{algorithmic}
  \end{game}
  \end{figure}
	
	Next, instead of using the half EPR pairs to teleport the quantum states through the single-photon pulses, they can be prepared directly in the state $\cptp U_i' \cptp U_{\sigma(i)} (\rho_0)$ from the start without changing the output distribution.
	Finally, a simple substitution of the random unitaries drawn from the Haar measure for the case $n_i = 1$ yields Game~\ref{game:real-ideal-hybrid2} which is therefore perfectly indistinguishable from Game~\ref{game:real-ideal-hybrid1}.\footnote{The corrections for $n_i = 1$ and $n_i \neq 1$ are the same now and the cases can therefore be merged.}

  \begin{figure}[hp]
  \begin{game}[H]
    \caption{Hybrid Game Without $\Fail$ - No Teleportation}
    \label{game:real-ideal-hybrid2}
    \begin{algorithmic} 
    
      \Procedure{\textbf{Emulation of the $\mathsf{BatchRSP}$ functionality}}{}
      \State Receive $K$ unitaries $\cptp U_1, \dots, \cptp U_K \in \mathcal{U}$ on its left interface.
      \EndProcedure
      
      \Procedure{\textbf{Emulation of} $N$ \textbf{Calls to} $\mathsf{WCPGenerator}$}{}
      \State Sample a random permutation $\pi \sample \operatorname{S}_n$.
    	  \State Apply $\pi$ to permute the elements of the tuple $(\mu_1, \dots, \mu_N)$ to obtain $(\mu_1', \dots, \mu_N')$.
   	  \For{$i \in [1,N]$}
    	  \State Sample $\cptp U_i' \sample \mathcal{U}$ from the Haar measure on $\mathcal{U}$.
      \State Sample $n_i \sample \Pois(\mu_i')$.
      \If{$n_i = 0$}
      \State Send vacuum to the Receiver.
      \ElsIf{$n_i = 1$}
      \State Prepare the quantum state $\cptp U_i' (\rho_0)$ and send it to the Receiver.
      \Else %
      \State Send $(n_i, U_i')$ to the Receiver.
      \EndIf
      \EndFor
      \EndProcedure

      \Procedure{\textbf{Emulation of the Decoy Estimation}}{}
      \State Receive $I'$ from the Receiver of size $K$.
      \State Undo the permutation, i.e. let $I \gets \{ \pi^{-1}(i) | i \in I' \}$.
      \State Perform the decoy estimation routine by running $\mathcal{B}$ on input $I$.
      \If{$\mathcal{B}$ does not returns $\Acc$}
      \State Send $\Abort$ to all interfaces.
      \EndIf
      \State Relabel the kept state from $1$ to $K$. Sample a random permutation $\sigma \sample \operatorname{S}_K$.
      \For{$j \in [1,K]$}
      \State Let $\tilde{\cptp U}_j = \cptp U_{\sigma(j)} \cptp U_j'^\dagger$.
      \EndFor
      \State Send the corrections $\sigma$, $(\tilde{\cptp U}_j)_{j\in [1, K]}$ to the Receiver.
      \EndProcedure
    \end{algorithmic}
  \end{game}
  \end{figure}
	
	From the outside, Game~\ref{game:real-ideal-hybrid2} is perfectly indistinguishable from the composition of the Sender's instructions in the protocol in the real world with the $N$ instances of the $\mathsf{WCPGenerator}$ resource, which concludes the proof of indistinguishability of the ideal and the real world.
\end{proof}
 
\section{Proof of Theorem~\ref{thm:instantiation}}\label{proof:ac-security}

We prove here the results from Section~\ref{sec:example}, restated below.

\setcounter{temp-count}{\value{theorem}}
\setcounter{theorem}{\value{count:instantiation}-1}
\begin{theorem}[Correctness and Security Errors]
    Given $\Delta_0 > 0$ in Algorithm~$\mathcal B$, $\delta > 0$ such that $K = ((2-e^{-\eta\nu}-e^{-\eta\nu'})/2 - \delta)N$ and $C = \max(c, c')$, the correctness error $\varepsilon_{\text{corr}}$ satisfies
  \begin{align}
    \varepsilon_{\text{corr}} \leq \exp(-\delta^2 N) + \exp(-\frac{\Delta_0^2 (bc' - b'c)^2}{4 C^2}N)
  \end{align}
while the security error is negligible in $N$ whenever there are additional constants $\Delta_0', \Delta_0'' > 0$ such that
\begin{align}
\Delta_0 + \Delta_0' + \frac{c'}{bc'-b'c}\Delta_0''  = \frac{c'(1-e^{-\eta\nu}) - c(1-e^{-\eta\nu'}) - c'(1-a-b-c)}{bc'-b'c}.
\end{align}
\end{theorem}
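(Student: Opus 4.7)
My plan addresses the correctness and security errors separately, exploiting the reductions to classical Games~\ref{game:correctness} and~\ref{game:security} provided by Theorem~\ref{thm:decoy-realization}.

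For the correctness bound, I would decompose the abort event of Game~\ref{game:correctness} into (i) too few non-vacuum pulses, i.e.\ $|I| < K$, and (ii) Algorithm~$\mathcal B$ returning $\Abort$ on the non-vacuum set $I$. Since in the honest case $|I| = P + P'$ with $P$ and $P'$ independent and binomially distributed as $B(N/2, 1-e^{-\eta\nu})$ and $B(N/2, 1-e^{-\eta\nu'})$, event (i) is handled by a direct Hoeffding bound on the sum, whose mean is $(2-e^{-\eta\nu}-e^{-\eta\nu'})N/2$; the choice of $K$ yields $\exp(-\delta^2 N)$. For event (ii), I would rewrite $T$ as $(c'P - cP')/(bc'-b'c)$, observe that its honest expectation is exactly $t$, and apply Hoeffding to the centered sum with per-summand range bounded by $C/|bc'-b'c|$; the threshold $\Delta_0 N/2$ then yields the second exponential. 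A union bound delivers the stated bound.

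For the security error, I would bound $\Pr[\textsc{GameSim}(\mathcal D) = \Fail]$ uniformly in $\mathcal D$ by following steps (i)--(v) of the sketch. After conditioning on $D_0 + D_0' + D_1 + D_1' = 0$, I would use $D_{\geq 3} \le C_{\geq 3}$ and $D'_{\geq 3} \geq 0$ to reduce the target to Equation~\ref{eq:ubound}. Then I would define good events on which the relevant random variables are close to their (conditional) expectations: $D_2/C_2$ and $D_2'/C_2'$ close to $d_2/c_2$ via hypergeometric tail bounds for sampling without replacement, and $C_2$, $C_2'$, $C_{\geq 3}$ close to $cN/2$, $c'N/2$ and $(1-a-b-c)N/2$ respectively via binomial Hoeffding bounds. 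The complementary bad events each have probability negligible in $N$, and their contribution is safely bounded by $1$. On the good event, substituting the expected values into $c'D_2 - cD_2'$ causes the bulk of the contribution to cancel, leaving a remainder of order $\Delta_0' N$, which reduces the problem to Equation~\ref{eq:uubound}. A final Hoeffding bound on $C_{\geq 3}$ then yields an exponential in ${\Delta_0''}^2 N$ provided the equality constraint relating $\Delta_0$, $\Delta_0'$ and $\Delta_0''$ is satisfied.

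The most delicate step will be (iv), where one must handle the correlations between $D_2$ and $D_2'$ induced by the shared constraint $D_2 + D_2' = d_2$ and the random partition of the two-photon pulses according to intensity. The choice of $\Delta_0'$ must leave enough room in the budget so that the simultaneous hypergeometric fluctuations of $D_2$ and $D_2'$ can be absorbed into the single slack term $\Delta_0' N$ appearing in the threshold of Equation~\ref{eq:uubound}. Once this is arranged, the algebraic identity in the theorem emerges as precisely the condition under which the remaining contribution of $C_{\geq 3}$ leaves a strictly positive gap $\Delta_0'' N$ between the acceptance threshold and the mean, to be closed by the final Chernoff--Hoeffding estimate. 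The remaining arithmetic, while tedious, is a straightforward accounting of deviations that each individually yield an exponentially small probability and whose union therefore remains negligible in $N$.
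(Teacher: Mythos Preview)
Your proposal is correct and follows essentially the same approach as the paper: the correctness bound is obtained via Hoeffding on the binomials $P,P'$ (the paper applies Hoeffding to $P$ and $P'$ separately and takes a union bound rather than treating $c'P-cP'$ as a single bounded-increment sum, but this is a cosmetic difference), and the security bound proceeds exactly through the steps (i)--(v) of the sketch---conditioning on the cheat event, replacing $D_{\geq 3},D'_{\geq 3}$ by $C_{\geq 3},0$, hypergeometric concentration of $D_2,D_2'$ around $(d_2/c_2)C_2,(d_2/c_2)C_2'$, binomial concentration of $C_2,C_2',C_{\geq 3}$, and the algebraic cancellation yielding the constraint on $\Delta_0,\Delta_0',\Delta_0''$. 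Your identification of step~(iv) as the delicate point, where the slack $\Delta_0'$ must absorb the joint hypergeometric fluctuations, matches precisely the role played by the domain restriction $\mathsf{Dom}_0$ in the paper's full proof.
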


\setcounter{theorem}{\value{temp-count}}

\begin{proof}[Proof of Correctness]
  In this proof, we consider that both players follow their protocol instructions honestly. Following Theorem~\ref{thm:decoy-realization}, the correctness error is upper bounded by the probability of Game~\ref{game:correctness} to return $\Abort$. This can happen if $|I|<K$ or if Algorithm~$\mathcal B$ returns $\Abort$. The probability of each event is upper bounded separately.
  
  We start by defining the random variable $P = D_0 + D_1 + D_2 + D_{\geq 3}$ where $D_i$ (resp. $D_{\geq i}$) is the number of acknowledged pulses with intensity $\nu$ with $i$ photons (resp. $\geq i$ photons). Likewise, we define $P' = D_0' + D_1' + D_2' + D_{\geq 3}'$ for pulses with intensity $\nu'$. 
\paragraph{Evaluating the probability that $\mathbf{|I|<K}$.}
By construction, with $I$ defined in Game~\ref{game:correctness}, we have $|I| = P + P'$. This is because $D_0 = 0$ and $D_0' = 0$ in the honest case. Now, we note that $P$ is a binomial distribution with parameters $N/2$ and $1-e^{-\eta\nu}$ and likewise for $P'$  with $\nu$ replaced by $\nu'$. Therefore, for $\delta = \frac{2 - e^{-\eta\nu} - e^{-\eta\nu'}}{2} - \frac{K}{N}$, using Hoeffding's bound and the union bound we obtain
\begin{equation}
  \Pr(|I| < K ) \leq 2\exp(-\delta^2 N).
\end{equation}

\paragraph{Evaluating the probability that Algorithm $\mathcal B$ returns $\Abort$.}
Algorithm $\mathcal B$ returns abort whenever $T < t - \Delta_0 \frac{N}{2}$ where $T$ is given by Eq.~\ref{eq:T} and $t$ by Eq.~\ref{eq:t}. This happens with the following probability:
\begin{align}
  \Pr(T< t - \Delta_0 \frac{N}{2})
  & = \Pr(c'P - cP' < \frac{N}{2} \left(c'(1-e^{-\eta\nu}) - c(1-e^{-\eta\nu' }) - \Delta_0 (bc' - b'c)\right)) \\
  & \leq 2 \max \left\{\Pr((1-e^{-\eta\nu}) - \frac{2P}{N} > \frac{\Delta_0(bc' - b'c)}{2c'}), \right.\\
  &\qquad\qquad \left. \Pr(\frac{2P'}{N} -(1-e^{-\eta\nu}) > \frac{\Delta_0(bc' - b'c)}{2c})\right\} \\
  & \leq 2 \max \left\{\exp(-\frac{\Delta_0^2(bc' - b'c)^2}{4{c'}^2}N),  \exp(-\frac{\Delta_0^2(bc' - b'c)^2}{4{c}^2}N)\right\},
\end{align}
where again we applied Hoeffding's bound and the union bound, taking into account that $\nu < \nu'$ implies $bc' - b'c > 0$.

\paragraph{Combining both cases.}
Using one last time the union bound to combine the two cases above, we obtain that the correctness error $\varepsilon_{\text{corr}}$ satisfies:
  \begin{align}
    \varepsilon_{\text{corr}} \leq 2 \exp(-\delta^2 N) + 2 \exp(-\frac{\Delta_0^2 (bc' - b'c)^2}{4 C^2}N),
  \end{align}
  for $C = \max(c, c')$. 
  This implies that for any choice of constant $\delta>0$ such that $K = ((2-e^{-\eta\nu}-e^{-\eta\nu'})/2 - \delta)N$ and for any choice of constant $\Delta_0 > 0$, the probability of $\Abort$ of Game~\ref{game:correctness} is negligible in $N$.

\end{proof}

\begin{proof}[Proof of Soundness]
For security, we need to evaluate the probability that Game~\ref{game:security} returns $\Fail$ against a cheating Receiver. Compared to an honest participant, the adversary can choose to remove all losses in the fibre, report some pulses as empty while they are in fact non-empty. Its goal is to only report pulses with more than two photons to the Sender so that it may extract information about the secret parameters.

\paragraph{Proof strategy.}
Game~\ref{game:security} returns $\Fail$ if the set $I$ contains more than $K$ elements, $D_0 + D_1 + D_0' + D_1' = 0$ and Algorithm~$\mathcal B$ returns $\Acc$, i.e.~$T \geq t - \Delta_0 \frac{N}{2}$. We therefore need to bound the following probability. If we denote $\mathsf{Cheat}$ the event $D_0 + D_1 + D_0' + D_1' = 0$, we can write the failure probability as:
\begin{align}
  \varepsilon_{\text{sec}} &= \Pr(|I| \geq K \wedge \Acc \wedge \mathsf{Cheat}) \\
  &\leq \Pr(|I| \geq K \wedge \Acc \mid \mathsf{Cheat})\\
  &\leq \Pr(\Acc \mid |I| \geq K \wedge \mathsf{Cheat})\Pr(|I| \geq K \mid \mathsf{Cheat}),
\end{align}
since the decision of whether to cheat rests with the adversary.

We will start by bounding the probability $\Pr(|I| \geq K \mid \mathsf{Cheat})$ that enough pulses are acknowledged by the adversary even after rejecting all the pulses with $0$ or $1$ photons. In particular, we will exhibit a condition on $\eta, \nu, \nu'$ under which there is an overwhelming probability that there are not enough pulses with only two photons or more to pass the requirement $|I| \geq K$. If this condition is not satisfied, we upper-bound that probability by $1$.

In the second part of the proof, we bound the probability that the check from algorithm $\mathcal{B}$ succeeds. The condition $T \geq t - \Delta_0\frac{N}{2}$ required by Algorithm $\mathcal{B}$ to return $\Acc$ can be expanded as:
\begin{align}
T \geq t - \Delta_0\frac{N}{2} \Leftrightarrow c'P - cP' \geq (c'(1-e^{-\eta\nu}) - c(1-e^{-\eta\nu'}) - \Delta_0)(bc'-b'c)\frac{N}{2}.
\end{align}
We simplify this by carefully restricting the domain of $P$ and $P'$ and proving that the probability that they fall outside of this restricted domain is negligible. This will yield an expression of these random variables that is much easier to bound by a negligible quantity. Outside of the domain, the probability that the adversary succeeds in cheating is taken to be $1$.

We again define the random variables $P = D_0 + D_1 + D_2 + D_{\geq 3}$ and $P'= D_0' + D_1' + D_2' + D_{\geq 3}'$ as in the previous proof. We again have $|I| = P + P'$. The condition $D_0 + D_1 + D_0' + D_1' = 0$ implies the following simplification
\begin{align}
  P & = D_2 + D_{\geq 3} \\
  P' & = D_2' + D_{\geq 3}'.
\end{align}

\paragraph{When does the adversary have enough pulses to cheat?}
Our aim in this part of the proof is to bound the probability $\Pr(|I| \geq K \mid \mathsf{Cheat})$. We have $K = ((2-e^{-\eta\nu}-e^{-\eta\nu'})/2 - \delta)N$ from the correctness proof above, and $|I| = P + P' = D_2 + D_2' + D_{\geq 3} + D_{\geq 3}' = D_{\geq 2} + D_{\geq 2}'$. We define $C_{\geq i}$ (respectively $C_{\geq i}'$) to be the random variable counting the number of pulses of intensity $\nu$ (respectively $\nu'$) with $i$ or more photons. Since $D_{\geq 2} \leq C_{\geq 2}$ and $D_{\geq 2}' \leq C_{\geq 2}'$ we have that:
\begin{align}
\Pr(|I| \geq K \mid \mathsf{Cheat}) \leq \Pr(C_{\geq 2} + C_{\geq 2}' \geq K).
\end{align}

$C_{\geq 2}$ follows a binomial distribution with parameters $N/2$ and $(1-a-b)$, and similarly for $C_{\geq 2}'$ with parameters $N/2$ and $(1-a'-b')$.
Let $\mathbf{C}_{\geq 2} = C_{\geq 2} + C_{\geq 2}'$. For any constant $\Gamma > 0$, using Hoeffding's bound and the union bound we obtain
\begin{equation}
  \Pr(\mathbf{C}_{\geq 2} > \left(\frac{2 - a - b - a' - b'}{2} + \Gamma\right)N) \leq 2\exp(-\Gamma^2 N).
\end{equation}

Therefore, if we have
\begin{align}
K \geq \left(\frac{2 - a - b - a' - b'}{2} + \Gamma\right)N &\Leftrightarrow \frac{2-e^{-\eta\nu}-e^{-\eta\nu'}}{2} - \delta \geq \frac{2 - a - b - a' - b'}{2} + \Gamma\\
&\Leftrightarrow 2(\Gamma + \delta) \leq a + b + a' + b' - (a^\eta + {a'}^\eta)
\end{align}
then, the adversary can cheat with at most negligible probability $2\exp(-\Gamma^2 N)$. 
This is minimised by taking $\Gamma = \frac{a + b + a' + b' - (a^\eta + {a'}^\eta)}{2} - \delta$.
Therefore the probability $\Pr(|I| \geq K \mid \mathsf{Cheat})$ can be upper-bounded by:
\begin{align}
\Pr(|I| \geq K \mid \mathsf{Cheat}) \leq 
\begin{dcases*}
2\exp(-\Gamma^2 N) 
   & if  $0 < \delta < \frac{a + b + a' + b' - (a^\eta + {a'}^\eta)}{2}$\,, \\[1ex]
1
   & otherwise\,.
\end{dcases*}
\end{align}
for $\Gamma = \frac{a + b + a' + b' - (a^\eta + {a'}^\eta)}{2} - \delta$.

\paragraph{Bounding the success probability of $\mathcal{B}$.}
Our goal here is to bound the probability $\Pr(\Acc \mid |I| \geq K \wedge \mathsf{Cheat})$. Recall that in this case $\Acc$ is equivalent to satisfying $c'P - cP' \geq (c'(1-e^{-\eta\nu}) - c(1-e^{-\eta\nu'}) - \Delta_0)(bc'-b'c)\frac{N}{2}$, for $P = D_2 + D_{\geq 3}$ and $P' = D_2' + D_{\geq 3}'$.

\textit{Simplifying $\mathbf{P}$ and $\mathbf{P'}$.}
We focus on $D_2$ and $D_2'$. Let $\mathbf{D}_2 = D_2 + D_2'$. Let $C_2$ and $C_2'$ be the number of pulses of intensity respectively $\nu$ and $\nu'$ which contain two photons, and $\mathbf{C}_2 = C_2 + C_2'$. Because $D_2 \sim H(\mathbf{C}_2, \mathbf{D}_2, C_2)$ and $D_2' \sim H(\mathbf{C}_2,\mathbf{D}_2,C_2')$ follow hypergeometric distributions, we have that $D_2$ and $D_2'$ converge toward $\frac{\mathbf{D}_2}{\mathbf{C}_2}  C_2$ and $\frac{\mathbf{D}_2}{\mathbf{C}_2}  C_2'$ respectively. That is, for any constant $\delta_0, \delta'_0 > 0$,
\begin{align}
\label{eq:d-2-c-2}
  \Pr(\abs{\frac{D_2}{C_2} - \frac{\mathbf{D}_2}{\mathbf{C}_2}} \geq \delta_0) & \leq 2 \exp(-2 \delta_0^2 C_2) \\ 
  \Pr(\abs{\frac{D_2'}{C_2'} - \frac{\mathbf{D}_2}{\mathbf{C}_2} } \geq \delta'_0) & \leq 2 \exp(-2 {\delta'}^2_0 C_2')
\end{align}
We now have to analyse how the rhs of these inequalities behave.

Then, for constants $\gamma_0, \gamma_0' > 0$, we have the following tail bounds:
\begin{align}
\label{eq:c-2-n}
  \Pr(\abs{C_2\frac{2}{N} - c}  > \gamma_0) & \leq 2\exp(-\gamma_0^2 N) \\
  \Pr(\abs{C_2'\frac{2}{N} - c'} > \gamma_0') & \leq 2\exp(-{\gamma_0'}^2 N).
\end{align}
Combining the bounds from Equations~\eqref{eq:d-2-c-2} and \eqref{eq:c-2-n} (and the corresponding ones for $D_2'$), we get that we can express $D_2$ and $D_2'$ as
\begin{align}
  D_2 & = \left(\frac{\mathbf{D}_2}{\mathbf{C}_2} + \delta\right)\left(c + \gamma\right) \frac{N}{2}\\
  D_2' & = \left(\frac{\mathbf{D}_2}{\mathbf{C}_2} + \delta' \right)\left(c' + \gamma'\right) \frac{N}{2}, 
\end{align}
with
\begin{align}
   \Pr(|\gamma| > \gamma_0)   & \leq 2 \exp(-\gamma_0^2 N), \\
   \Pr(|\delta| > \delta_0)   & \leq  2 \times 2 \max\left\{ \exp(-\gamma_0^2 N), \exp(-\delta_0^2 (c-\gamma_0) N) \right\}, \\
   \Pr(|\gamma'| > \gamma'_0) & \leq 2 \exp(-{\gamma'_0}^2 N), \\
   \Pr(|\delta'| > \delta_0') & \leq 2 \times 2 \max\left\{ \exp(-{\gamma'_0}^2 N), \exp(-{\delta'_0}^2 (c-\gamma_0') N) \right\}.
\end{align}
Indeed, 
\begin{align}
\Pr(\abs{\delta} > \delta_0)  & = \overbrace{\Pr(|\delta| > \delta_0 \vert \abs{\gamma} > \gamma_0)}^{\strut \leq 1} \overbrace{\Pr(\abs{\gamma} > \gamma_0)}^{\strut \leq 2 \exp(-\gamma_0^2 N)} \\ &+ \underbrace{\Pr(|\delta| > \delta_0 \vert \abs{\gamma} \leq \gamma_0)}_{\leq 2 \underset{C_2\,\rm{s.t.}\,\abs{\gamma} \leq \gamma_0 }{\max} \exp(-2 \delta_0^2 C_2)} \underbrace{\Pr(\abs{\gamma} \leq \gamma_0)}_{\strut \leq 1} \nonumber \\
\Pr(\abs{\delta} > \delta_0)  & \leq 2 \exp(-\gamma_0^2 N) + 2 \exp(-\delta_0^2 (c-\gamma_0) N) \\
\Pr(\abs{\delta} > \delta_0)  & \leq 2 \times 2 \max \left\{\exp(-\gamma_0^2 N), \exp(-\delta_0^2 (c-\gamma_0) N) \right\}
\end{align}

\textit{Domain restriction.}
Hence, if we denote $\mathsf{Dom}_0$ the event $|\gamma| \leq \gamma_0 \wedge |\delta| \leq \delta_0 \wedge |\gamma'| \leq \gamma_0' \wedge |\delta'| \leq \delta'_0$, the four conditions above imply
\begin{align}
\label{eq:bound-domain}
  & \Pr(\neg \mathsf{Dom}_0) \\
  & \quad \leq 32 \max\left\{ \exp(-\gamma_0^2 N), \exp(-\delta_0^2 (c-\gamma_0) N), \exp(-{\gamma'_0}^2 N), \exp(-{\delta'_0}^2 (c'-\gamma_0') N) \right\}.\nonumber 
\end{align}
Since
\begin{align*}
\label{eq:comb}
\Pr(\Acc \mid |I| \geq K \wedge \mathsf{Cheat}) &= \Pr(\Acc \mid |I| \geq K \wedge \mathsf{Cheat} \wedge \neg \mathsf{Dom}_0)\Pr(\neg \mathsf{Dom}_0) \\
&\quad + \Pr(\Acc \mid |I| \geq K \wedge \mathsf{Cheat} \wedge \mathsf{Dom}_0)\Pr(\mathsf{Dom}_0)\\
&\leq \Pr(\neg \mathsf{Dom}_0) + \Pr(\Acc \mid |I| \geq K \wedge \mathsf{Cheat} \wedge \mathsf{Dom}_0)
\end{align*}
with $\Pr(\neg \mathsf{Dom}_0)$ negligible in $N$. We now only need to analyse the probability of $\Acc$ in the $\mathsf{Dom}_0$ case with a cheating Receiver and $|I| \geq K$.

\textit{Bounding the restricted probability.}
We rewrite the $\Acc$ condition for Algorithm~$\mathcal B$ as follows using our simplifications for $P$ and $P'$
\begin{align}
  & T \geq t - \Delta_0\frac{N}{2} \\
  \Leftrightarrow
  & \left( c'(D_2 + D_{\geq 3}) - c (D_2' + D_{\geq 3}')\right)\frac{2}{N}\nonumber \\
  & \qquad \geq \left(c'(1-e^{-\eta\nu}) - c(1-e^{-\eta\nu'})\right) - \Delta_0(bc' - b'c) \\
  \Leftrightarrow
  & \left(cc'(\delta - \delta') + \frac{\mathbf{D}_2}{\mathbf{C}_2}(c'\gamma - c\gamma') + (c'\gamma\delta - c\gamma'\delta')\right) + (c'D_{\geq 3} - cD_{\geq 3}') \frac{2}{N} \nonumber \\
  & \qquad \geq \left(c'(1-e^{-\eta\nu}) - c(1-e^{-\eta\nu'})\right) - \Delta_0(bc' - b'c)
\end{align}
We can now simplify the lhs even more by upper-bounding it as follows
\begin{align}
  & \left(cc'(\delta - \delta') + c'\gamma \left(\frac{\mathbf{D}_2}{\mathbf{C}_2}+ \delta\right) - c\gamma'\left(\frac{\mathbf{D}_2}{\mathbf{C}_2}+\delta'\right)\right) + c'C_{\geq 3}\frac{2}{N} \nonumber \\
  & \qquad \geq \left(cc'(\delta - \delta') + \frac{\mathbf{D}_2}{\mathbf{C}_2}(c'\gamma - c\gamma') + (c'\gamma\delta - c\gamma'\delta')\right) + (c'D_{\geq 3} - cD_{\geq 3}') \frac{2}{N}.
\end{align}
Because $|\gamma| \leq \gamma_0$, $|\gamma'| \leq \gamma_0'$, $|\delta| \leq \delta_0$, $|\delta'| \leq\delta_0'$ and $0\leq \frac{\mathbf{D}_2}{\mathbf{C}_2}\leq 1$ we have
\begin{equation}
  cc'(\delta_0 + \delta_0') + c'\gamma_0 (1+ \delta_0) + c\gamma_0'(1+\delta_0') \geq cc'(\delta - \delta') + c'\gamma \left(\frac{\mathbf{D}_2}{\mathbf{C}_2}+ \delta\right) - c\gamma'\left(\frac{\mathbf{D}_2}{\mathbf{C}_2}+\delta'\right). 
\end{equation}
Defining the new constant $\Delta_0'$ via $(bc'-b'c)\Delta_0' = cc'(\delta_0 + \delta_0') + c'\gamma_0 (1+ \delta_0) + c\gamma_0'(1+\delta_0')$ gives
\begin{equation}
(bc'-b'c)\Delta_0' \geq   cc'(\delta - \delta') + c'\gamma \left(\frac{\mathbf{D}_2}{\mathbf{C}_2}+ \delta\right) - c\gamma'\left(\frac{\mathbf{D}_2}{\mathbf{C}_2}+\delta'\right)
\end{equation}
so that,
\begin{align}
  & \Pr(\Acc \mid |I| \geq K \wedge  \mathsf{Cheat} \wedge \mathsf{Dom}_0) \nonumber \\
  & \quad \leq \Pr(c'C_{\geq 3}\frac{2}{N} + (\Delta_0 + \Delta_0')(bc' - b'c)  \geq c'(1-e^{-\eta\nu}) - c(1-e^{-\eta\nu'})).
\end{align}
Using the fact that $C_{\geq 3}$ follows a binomial distribution with parameters $N/2$ and $(1-a-b-c)$, we have that for a constant $\Delta_0''>0$,
\begin{equation}
  \Pr(C_{\geq 3}\frac{2}{N} - (1-a-b-c) > \Delta_0'' ) \leq \exp(-{\Delta_0''}^2 N).\label{eq:scaling}
\end{equation}

\paragraph{Conclusion.}
The result of the bounds derived above is that we obtain a negligible bound on the cheating probability with parameters $\nu, \nu', \Delta_0, \Delta_0', \Delta_0'' > 0$ if the following constraint is satisfied
\begin{align}\label{eq:error}
\Delta_0'' = \frac{1}{c'}\left(c'(1-e^{-\eta\nu}) - c(1-e^{-\eta\nu'}) - (\Delta_0 + \Delta_0')(bc'-b'c)\right) - (1-a-b-c)
\end{align}

Then we conclude that the probability that $|I| > K$ and that Algorithm~$\mathcal B$ accepts conditioned on $D_0 + D_1 + D_0' + D_1' = 0$ and $\mathsf{Dom}_0$ is negligible. Furthermore we proved earlier that the probability of being outside the domain is negligible. Therefore the probability that $|I| > K$ and Algorithm~$\mathcal B$ returns $\Acc$ and $D_0 + D_1 + D_0' + D_1' = 0$ is negligible in $N$.

\end{proof}

\begin{proof}[Proof of AC Security]

The complete AC security error of the protocol corresponds to the sum of the correctness and security errors. Therefore we summarise and combine the conditions obtained from both of the proofs above. This can be expressed as minimising the following function
\begin{align}
\varepsilon_{{\rm AC}} &= \varepsilon_{\rm corr} + \varepsilon_{\rm sec}\\
&= 2\exp(-\delta^2 N) + 2\exp(-\frac{\Delta_0^2 (bc' - b'c)^2}{4 C^2}N) + P_{|I|, K}\left(M_{\delta_0, \delta_0', \gamma_0, \gamma_0'} + \exp(-{\Delta_0''}^2 N)
\right),
\end{align}
with 
\begin{align}
M_{\delta_0, \delta_0', \gamma_0, \gamma_0'} &= \max\left\{ \exp(-\gamma_0^2 N), \exp(-\delta_0^2 (c-\gamma_0) N), \exp(-{\gamma'_0}^2 N), \exp(-{\delta'_0}^2 (c'-\gamma_0') N) \right\} \label{eq:epsilon_AC}\\
P_{|I|, K} &= 
\begin{dcases*}
2\exp(-\left(\frac{a + b + a' + b' - (a^\eta + {a'}^\eta)}{2} - \delta\right)^2 N) 
   & if  $0 < \delta < \frac{a + b + a' + b' - (a^\eta + {a'}^\eta)}{2}$\,, \\[1ex]
1
   & otherwise\,.
\end{dcases*}
\end{align}

under the following set of constraints
\begin{align}
&0 < \delta < \frac{2 - a^\eta - {a'}^\eta}{2}; 0 < \Delta_0; 0 < \delta_0; 0 < \delta_0'; 0 < \gamma_0 ; 0 < \gamma_0' \label{eq:constr_deltas} \\
&0 < \Delta_0' = \frac{1}{bc' - b'c}(cc'(\delta_0 + \delta_0') + c'\gamma_0 (1 + \delta_0) + c\gamma_0'(1 + \delta_0')) \label{eq:constr_delta_p}\\
&0 < \Delta_0'' = \frac{1}{c'}\left(c'(1-a^\eta) - c(1-{a'}^\eta) - (\Delta_0 + \Delta_0')(bc'-b'c)\right) - (1-a-b-c) \label{eq:constr_delta_pp} \\
&C = \max\left\{c, c'\right\}. 
\end{align}

\end{proof}

\section{Proof of Theorem~\ref{thm:scaling}}
\label{sec:app-scaling}
\setcounter{theorem}{\value{count:scaling}-1}
\begin{theorem}[Scaling for $\eta \rightarrow 0$]
For two pulses of intensities $\nu, \nu'$  with $\nu  =  \alpha\nu'$ and $0<\alpha<1$, the number of pulses needed to obtain a given correctness and security error scales as $1/\eta^{2}$ for $\eta \rightarrow 0$ .
\end{theorem}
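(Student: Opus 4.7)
The plan is to exhibit fixed constants $\nu, \nu' > 0$ (with $\nu = \alpha\nu'$) and positive slack parameters $\Delta_0, \Delta_0', \Delta_0'', \delta, \delta_0, \delta_0', \gamma_0, \gamma_0'$ all \emph{proportional to $\eta$}, so that every exponent appearing in the composable error bound of Theorem~\ref{thm:instantiation} (collected in Eq.~\ref{eq:epsilon_AC}) takes the form $\kappa_i\eta^2 N$ with $\kappa_i > 0$ independent of $\eta$ and $N$. Combined with Theorem~\ref{thm:instantiation}, this yields $N = O(\log(1/\varepsilon)/\eta^2)$ to achieve composable error $\varepsilon$, hence the $1/\eta^2$ scaling in the small-$\eta$ regime.

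First, I would analyse the binding constraint Eq.~\ref{eq:constr_delta_pp}. A Taylor expansion to leading order in $\eta$ yields its right-hand side as $\eta(c'\nu - c\nu')/c' - (1-a-b-c) + O(\eta^2)$ minus the terms linear in $\Delta_0, \Delta_0'$. The coefficient $c'\nu - c\nu' = \nu\nu'(b' - b)/2$ is strictly positive because $x \mapsto xe^{-x}$ is strictly increasing on $(0,1)$ and $\nu < \nu'$; meanwhile $1-a-b-c \sim \nu^3/6$ for small $\nu$. By first fixing $\nu'$ as a sufficiently small constant depending only on $\alpha$, the $\eta$-independent obstruction $1-a-b-c$ becomes sub-dominant to the $O(\eta)$ gain throughout the targeted range of $\eta$, and Eq.~\ref{eq:constr_delta_pp} admits a solution with $\Delta_0''$ of order $\eta$.

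Next, I would parametrise the remaining slacks as $\Delta_0 = \lambda\eta$, $\Delta_0' = \lambda'\eta$, $\gamma_0 = \mu\eta$, $\gamma_0' = \mu'\eta$, $\delta_0 = \xi\eta$, $\delta_0' = \xi'\eta$, for positive constants $\lambda, \lambda', \mu, \mu', \xi, \xi'$ to be chosen. Eq.~\ref{eq:constr_delta_p} then automatically produces $\Delta_0' \propto \eta$ consistently, while Eq.~\ref{eq:constr_delta_pp} reduces to a linear inequality in these constants, feasible by the previous paragraph. For $\delta$ one observes that $\mathbb{E}[|I|]/N = (2-e^{-\eta\nu}-e^{-\eta\nu'})/2 \sim \eta(\nu+\nu')/2$, so choosing $K$ as a fixed constant (independent of $\eta$) makes $\delta \sim \eta$ as soon as $N$ is large. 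Finally, the parameter $\Gamma$ governing $P_{|I|,K}$ in Eq.~\ref{eq:epsilon_AC} satisfies $\Gamma = (a+b+a'+b' - a^\eta - {a'}^\eta)/2 - \delta \sim \eta \cdot \text{const}$, once $\eta$ is above the (small, $\nu$-dependent) threshold at which $a+b - a^\eta$ becomes positive.

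With every slack parameter of order $\eta$ and all remaining quantities ($b, c, b', c', bc'-b'c, C$) being strictly positive constants depending only on $\nu, \nu', \alpha$, every exponent in Eq.~\ref{eq:epsilon_AC} becomes $\kappa_i\eta^2 N$ with $\kappa_i > 0$. A union bound over the finitely many exponentials yields $\varepsilon_{\rm AC} \leq C_0 \exp(-\kappa\eta^2 N)$ with $\kappa = \min_i \kappa_i > 0$ and $C_0$ an $\eta$-independent prefactor, so $N \geq \log(C_0/\varepsilon)/(\kappa\eta^2) = O(1/\eta^2)$ pulses suffice. The main technical obstacle is securing simultaneous positivity of all slack constants under the coupled nonlinear constraint Eq.~\ref{eq:constr_delta_pp}: if $\nu'$ is taken too small the prefactor $\kappa$ degenerates, whereas if $\nu'$ is taken too large the constant obstruction $1-a-b-c$ dominates the $O(\eta)$ gain and the scheme fails for small $\eta$. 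The resolution is an explicit construction in which $\nu'$ is calibrated small enough (as a function of $\alpha$) to absorb the obstruction in the targeted low-$\eta$ regime, and the $O(\eta)$ budget on the right-hand side of Eq.~\ref{eq:constr_delta_pp} is then distributed evenly among $\Delta_0, \Delta_0', \Delta_0''$, with $\mu, \mu', \xi, \xi'$ chosen as small fractions of $\lambda, \lambda'$ to preserve compatibility with Eq.~\ref{eq:constr_delta_p}.
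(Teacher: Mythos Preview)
Your overall strategy mirrors the paper's: keep $\nu,\nu'$ fixed, take all slack parameters $\Delta_0,\Delta_0',\Delta_0'',\delta,\delta_0,\delta_0',\gamma_0,\gamma_0'$ proportional to $\eta$, and conclude that every exponent in the error bound of Eq.~\eqref{eq:epsilon_AC} is of the form $\kappa_i\eta^2 N$ with $\kappa_i>0$ independent of $\eta$. The paper carries this out tersely by expanding Eq.~\eqref{eq:constr_delta_pp} to leading order in $\eta$ as $\Delta_0'' \simeq (\eta-(\Delta_0+\Delta_0'))\,\nu'\alpha(1-\alpha)$ and then splitting the $\eta$-budget among $\Delta_0,\Delta_0',\Delta_0''$---precisely the ``distribute evenly'' step you describe at the end.

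There is, however, a genuine gap in your handling of the $(1-a-b-c)$ obstruction, and you should be aware that the paper's proof glosses over the same point rather than resolving it. You assert that choosing $\nu'$ small enough \emph{as a constant depending only on $\alpha$} makes $1-a-b-c$ subdominant to the $O(\eta)$ gain ``throughout the targeted low-$\eta$ regime''. This cannot hold as $\eta\to 0$: for any fixed $\nu'>0$ one has $1-a-b-c\approx \alpha^3\nu'^3/6>0$ constant, while the gain $\eta\,\nu'\alpha(1-\alpha)\to 0$, so Eq.~\eqref{eq:constr_delta_pp} forces $\Delta_0''<0$ once $\eta$ drops below the threshold $\eta_0\sim \alpha^2\nu'^2/(6(1-\alpha))$. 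The paper's leading-order expansion simply drops the $(1-a-b-c)$ term without comment, which is the same issue in disguise. A correct formulation either (i) restricts to $\eta$ above this $\nu'$-dependent threshold, so that the $1/\eta^2$ scaling is asserted only on a window $[\eta_0,\eta_1]$ for each fixed $\nu'$, or (ii) lets $\nu'$ shrink with $\eta$, in which case $b,c,b',c',bc'-b'c$ all acquire $\eta$-dependence and the exponent analysis must be redone. Two smaller remarks: your claim that $c'\nu-c\nu'>0$ relies on $x\mapsto xe^{-x}$ being increasing, which needs $\nu'<1$ and should be stated; and your concern about the sign of $\Gamma$ is unnecessary, since when that case condition fails one simply takes $P_{|I|,K}=1$ in Eq.~\eqref{eq:epsilon_AC} and the remaining exponentials still deliver the bound.
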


\setcounter{theorem}{\value{temp-count}}

\begin{proof}[Scaling]
  The strategy for proving the scaling is first to ensure that the constraints for the validity of the bound on $\varepsilon_{\mathrm{AC}}$ can be met. Then, it needs to be checked that $\varepsilon_{\mathrm{AC}}$ can be made arbitrarily small as $\eta \rightarrow 0$.

  We start with Equation~\eqref{eq:constr_delta_pp}, and expand it to the lowest order in $\eta$ and setting $\nu = \alpha \nu'$ for $0<\alpha<1$. This gives:
  \begin{equation}
    \Delta_0'' \simeq \left(\eta - (\Delta_0 + \Delta_0')\right)\nu'\alpha(1-\alpha).
  \end{equation}  
  For small $\varepsilon_{\mathrm{AC}}$, we can thus keep $\Delta_0'' > 0$ by ensuring that $\Delta_0, \Delta_0' \lesssim \frac{\eta}{3}$ which would them imply $\Delta_0'' \lesssim \eta\nu'\alpha(1-\alpha)/3$.  In turn to ensure that $\Delta_0'\lesssim \frac{\eta}{3}$, we realize that $\Delta_0'$ is independent of $\eta$. This implies that by choosing $\delta_0, \delta_0', \gamma_0, \gamma_0' \propto \eta$, this can be ensured.  For $\Delta_0$, we observe that it can be freely set to $\eta/3$ as the only constraint (Equation~\eqref{eq:constr_deltas}) is $0 < \Delta_0$.  Finally, by setting $\delta \propto \eta$, we can achieve $\delta < (2-a^{\eta}-a'^\eta)/2$.

  Crucially, the above constraints allow to keep $\nu'$ constant so that each term in Equation~\eqref{eq:epsilon_AC} can be kept below a target value for $\eta \rightarrow 0$ as long as $N \propto \frac{1}{\eta^{2}}$, which concludes the proof.
\end{proof}

\end{document}